\newcommand{\hr}{{\mathcal H}}
\newcommand{\cn}{{\mathcal N }}
\newcommand{\cs}{{\mathcal S}}
\newcommand{\cc}{{\mathbb C}}
\newcommand{\nn}{{\mathbb N}}
\newcommand{\eps}{{\varepsilon}}        
\newcommand{\bS}{\mathbf S}
\newcommand{\bX}{\mathbf X}
\newcommand{\eins}{{\mathbbm{1}}}
\newcommand{\BIGOP}[1]
{
\mathop{\mathchoice%
{\raise-0.22em\hbox{\Large $#1$}}%
{\raise-0.05em\hbox{\large $#1$}}{\hbox{\large $#1$}}{#1}}}
\newcommand{\BIGboxplus}{\mathop{\mathchoice%
{\raise-0.35em\hbox{\huge $\boxplus$}}%
{\raise-0.15em\hbox{\Large $\boxplus$}}{\hbox{\large $\boxplus$}}{\boxplus}}}
\newtheorem{theorem}{Theorem}
\newtheorem{lemma}{Lemma}
\newtheorem{remark}{Remark}
\newcommand{\tr}{\mathrm{tr}}
\newcommand{\supp}{\mathrm{supp}}
\DeclareMathOperator{\conv}{conv}
\DeclareMathOperator{\aff}{aff}
\DeclareMathOperator{\ri}{ri}
\DeclareMathOperator{\rebd}{rebd}
\DeclareMathOperator{\linspan}{span}
\begin{document}
\title{Hypothesis Testing on Invariant Subspaces of the Symmetric Group, Part I -\\Quantum Sanov's Theorem and Arbitrarily Varying Sources}
\author{J. N\"otzel \\
\scriptsize{Electronic address: janis.noetzel@tum.de}
\vspace{0.2cm}\\
{\footnotesize Lehrstuhl f\"ur Theoretische Informationstechnik, Technische Universit\"at M\"unchen,}\\
{\footnotesize 80290 M\"unchen, Germany}
}
\maketitle

\begin{abstract}
We report a proof of the quantum Sanov Theorem by elementary application of basic facts about representations of the symmetric group, together with a complete characterization of the optimal error exponent in a situation where the null hypothesis is given by an arbitrarily varying quantum source instead. Our approach differs from previous ones in two points: First, it supports a reasoning inspired by the ``method of types''. Second, the measurement scheme we propose to distinguish the two alternatives not only does that job asymptotically perfect, but also yields additional information about the null hypothesis. An example of that is given. The measurement is composed of projections onto permutation-invariant subspaces, thus providing a direct link between one of the most basic tasks in quantum information on the one hand side and fundamental objects in representation theory on the other.\\
We additionally connect to representation theory by proving a relation between Kostka numbers and quantum states, and to state estimation via a generalization of a well-known spectral estimation theorem to non-i.i.d. sequences.
\end{abstract}
\begin{section}{Introduction}
The importance of the symmetric group in quantum Shannon information is inherently connected to the central idea of using an information carrying or transmitting system several times in order to defeat noise. This approach naturally introduces an action of the symmetric group into any model within that theory. On these pages, we will therefore put a clear focus on representations of $S_n$ while almost completely ignoring its commutant which is, in our case, e.g. a representation of the unitary group.\\
We will connect two fundamental concepts: hypothesis testing and invariant subspaces of the symmetric group. This also leads to new insights concerning the notion of a quantum method of types.\\
We will now briefly introduce the two concepts, starting with hypothesis testing. More specifically, we will consider asymmetric hypothesis testing, in the setting of Stein's Lemma.
\\\\
Stein's Lemma is one of the fundamental statements in information theory. It gives precise bounds on the probability of correctly identifying the state of a system at hand in situations where it is guaranteed that the actual state is chosen from a set of two states (I and II). The underlying assumption is that infinitely many copies of the system in the exact same state can be prepared, and then measured jointly. The measurement should then identify the true state of the system. The situation is not symmetric: While it is desired to identify state I only with a constant nonnegative probability, the probability of correctly identifying state II goes to one asymptotically fast, and the optimal exponent is quantified in Stein's Lemma.\\
In case the system at hand is described by quantum theory, the first proof of the direct of the corresponding quantum Stein's Lemma was given by Hiai and Petz in \cite{hiai-petz}. Soon after, Ogawa and Nagaoka \cite{ogawa-nagaoka} completed the proof by showing the converse. Since then, there have been different variants of the proof and extensions to more general scenarios: Hayashi \cite{hayashi-representation-theory} proved the quantum Stein's Lemma using representation theory of the special linear group and further developed the idea of approximating the quantum relative entropy by the classical relative entropy between the distribution of the measurement results for specific measurements. It remains to be seen whether our result can be treated similarly. Our work differs from his approach foremost in the very clear and explicit choice of parameters that we use to describe our measurements. We hope that this will increase applicability of methods from representation theory to quantum Shannon information theory. Hayashi also investigated the problem from a different point of view in \cite{hayashi01} and \cite{hayashi02}, namely with an information-spectrum approach. This work also further clarifies the connection between statistics of measurements performed on asymptotically many copies of quantum states and the relative entropy of the respective states. Meanwhile, \cite{nagaoka} gave a much simpler proof of the converse. Later, Ogawa and Hayashi together investigated the interplay between the different rates of convergence of the two errors. The generaliztaion to the ergodic case was done by Bjelakovi\c{c} and Siegmund-Schultze in \cite{bjelakovic-siegmund_schultze-ergodic}, and later generalized to the case of a stationary (instead of i.i.d.) hypothesis by Bjelakovi\c{c}, Deuschel, Kr\"uger, Seiler, Siegmund-Schultze, Szkola in \cite{bjelakovic-deuschel-krueger-seiler-siegmund_schultze-szkola-typical-support-and-sanov}. The proof of the quantum Sanov theorem was first given in the i.i.d. case by Hayashi (Theorem 2 in \cite{hayashi-representation-theory}), only without attributing the specific name to it. It was then independently reproven by Bjelakovi\c{c}, Deuschel, Kr\"uger, Seiler, Siegmund-Schultze, Szkola in \cite{bjelakovic-deuschel-krueger-seiler-siegmund_schultze-szkola-sanov} and extended to the stationary case in \cite{bjelakovic-deuschel-krueger-seiler-siegmund_schultze-szkola-typical-support-and-sanov}.\\
Another approach is that of Audenaert, Nussbaum, Szkola and Verstraete \cite{audenaert-nussbaum-szkola-verstraete} who were concerned with asymmetric hypotheses testing, their results contain the quantum Sanov theorem as a corollary. Their results got extended to correlated states by Hiai, Mosonyi and Ogawa in \cite{hiai-mosonyi-ogawa}.\\
It was noticed in several of these papers that the choice of the optimal POVM depends on the reference state, and this turns out to be no different in our approach.\\
A complete estimation scheme that is based on representation theory of the symmetric group has also been described in \cite{keyl}, including its large deviation behavior.\\
Lately, an important generalizations of Stein's Lemma was given for the case where the alternative hypothesis consists of a sequence of sets of states satisfying certain stability properties results by Brandao and Plenio \cite{brandao-plenio}, and their results were extended to the case of restricted measurements in \cite{brandao-harrow-lee-peres}.
\\\\
Let us now state the most basic task that is to be performed in hypothesis testing: Given two states $\rho,\sigma\in\cs(\mathbb C^d)$ and $\nu\in(0,1)$, determine
\begin{align}
\lim_{n\to\infty}\frac{1}{n}\log(\min_{0\leq P\leq\eins^{\otimes n}}\{\tr\{P\sigma^{\otimes n}\}:\tr\{P\rho^{\otimes n}\}\geq1-\nu\}).
\end{align}
The quantity $\min_{0\leq P\leq\eins^{\otimes n}}\{\tr\{P\sigma^{\otimes n}\}:\tr\{P\rho^{\otimes n}\}\geq1-\nu\}$ is usually abbreviated as $\beta_{\nu,n}(\rho,\sigma)$ and named the 'type two error'. The statement of the quantum Stein's Lemma is then:
\begin{align}
\forall\ \nu\in(0,1):\qquad\lim_{n\to\infty}\frac{1}{n}\log(\beta_{\nu,n}(\rho,\sigma))=-D(\rho\|\sigma).
\end{align}
In case we want to distinguish more subtle hypotheses, we have to adjust this definition. In our case, it will be sufficient to look at the following: For a state $\sigma\in\cs(\mathbb C^d)$, $\nu\in(0,1)$ and a sequence $\Delta:=(\mathfrak S_n)_{n\in\nn}$ of sets such that $\mathfrak S_n\subset\cs((\mathbb C^d)^{\otimes n})$, we wish to investigate the quantity
\begin{align}
\beta_{n,\nu}(\Delta,\sigma):=\min_{0\leq P\leq\eins^{\otimes n}}\{\tr\{P\sigma^{\otimes n}\}:\inf_{\omega\in\mathfrak S_n}\tr\{P\omega\}\geq1-\nu\}.
\end{align}
Our main result (which we state in a much more explicit way later due to intended applications to other problems) can then be formulated as follows.
\begin{theorem}\label{thm:mainresult-short-form}
First, if $\Delta=(\{\rho^{\otimes n}\}_{\rho\in\mathfrak S})_{n\in\nn}$ for some set $\mathfrak S\subset\cs(\mathbb C^d)$, then
\begin{align}
\forall\nu\in(0,1),\qquad\lim_{n\to\infty}\frac{1}{n}\log(\beta_{n,\nu}(\Delta,\sigma))=-\inf_{\rho\in\mathfrak S}D(\rho\|\sigma).
\end{align}
Second, if $\Delta=(\{\rho_{s^n}\}_{s^n\in\bS^n})_{n\in\nn}$, where $\rho_{s^n}:=\otimes_{i=1}^n\rho_{s_i}$ and each $\rho_{s_i}\in\mathfrak S$ for some set $\mathfrak S\subset\cs(\mathbb C^d)$, then
\begin{align}
\forall\nu\in(0,1),\qquad\lim_{n\to\infty}\frac{1}{n}\log(\beta_{n,\nu}(\Delta,\sigma))=-\inf_{\rho\in\conv(\mathfrak S)}D(\rho\|\sigma).
\end{align}
\end{theorem}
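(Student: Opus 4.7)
\emph{Overview and Part 1, converse.} I would dispose of both converse directions first by reduction to the ordinary quantum Stein's Lemma before engaging the representation-theoretic machinery on the direct parts. For Part 1, fix any $\rho\in\mathfrak S$ and note that any admissible $P$ for $\Delta$ automatically satisfies $\tr\{P\rho^{\otimes n}\}\geq 1-\nu$, so $\beta_{n,\nu}(\Delta,\sigma)\geq\beta_{n,\nu}(\{\rho^{\otimes n}\},\sigma)$. The ordinary quantum Stein's Lemma then yields $\liminf\tfrac{1}{n}\log\beta_{n,\nu}(\Delta,\sigma)\geq -D(\rho\|\sigma)$; taking the supremum over $\rho\in\mathfrak S$ closes the converse.

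\emph{Part 1, direct part.} Here the Schur--Weyl decomposition $(\cc^d)^{\otimes n}\cong\bigoplus_\lambda V_\lambda\otimes U_\lambda$ with its isotypic projectors $P_\lambda$ supplies the method-of-types backbone. For each single reference state $\rho$ I would assemble a test as a sum of $P_\lambda$ with $\bar\lambda:=\lambda/n$ close to $\mathrm{spec}(\rho)$, conjugated into $\rho$'s eigenbasis so that the full quantum relative entropy $D(\rho\|\sigma)$ (not merely its spectral counterpart) emerges from the standard bound $\tr\{P_\lambda\sigma^{\otimes n}\}\leq\mathrm{poly}(n)\cdot 2^{-nD(\bar\lambda\|\mathrm{spec}(\sigma))}$ together with $\dim V_\lambda\leq\mathrm{poly}(n)\cdot 2^{nH(\bar\lambda)}$. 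To handle an arbitrary set $\mathfrak S$, I would then pass to a finite $\delta$-net $\{\rho_1,\ldots,\rho_N\}\subset\mathfrak S$ (truncating to states with $\rho\ll\sigma$ to tame the continuity of $D(\cdot\|\sigma)$), form the state-specific tests $P_n(\rho_i)$, and take $P_n:=\bigvee_i P_n(\rho_i)$. The subadditivity-type union bound on the type-II error absorbs $N$ into the $o(1)$ term, while the $\bigvee$ preserves the type-I guarantee on each $\rho_i^{\otimes n}$.

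\emph{Part 2.} The converse again reduces to the i.i.d.\ case: for any $p\in\cP(\bS)$, setting $\bar\rho_p:=\sum_s p(s)\rho_s\in\conv(\mathfrak S)$ one has $\bar\rho_p^{\otimes n}=\sum_{s^n}p^n(s^n)\rho_{s^n}$, so $\inf_{s^n}\tr\{P\rho_{s^n}\}\geq 1-\nu$ forces $\tr\{P\bar\rho_p^{\otimes n}\}\geq 1-\nu$; the i.i.d.\ Stein's Lemma then gives $\liminf\tfrac{1}{n}\log\beta\geq -D(\bar\rho_p\|\sigma)$, and the infimum over $p$ produces $-\inf_{\rho\in\conv(\mathfrak S)}D(\rho\|\sigma)$. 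For the direct part, the core is a symmetrization lemma: for any $s^n$ of empirical type $q$,
\begin{equation*}
\bar\rho_q^{\otimes n}\geq\frac{1}{(n+1)^{|\bS|}}\rho_{s^n}^{\mathrm{sym}},\qquad \rho_{s^n}^{\mathrm{sym}}:=\frac{1}{n!}\sum_{\pi\in S_n}\pi\rho_{s^n}\pi^{-1},
\end{equation*}
coming from $\bar\rho_q^{\otimes n}\geq 2^{-nH(q)}\sum_{t^n\in T_q^n}\rho_{t^n}$ together with $|T_q^n|\geq(n+1)^{-|\bS|}2^{nH(q)}$. I would then apply Part 1 to the i.i.d.\ hypothesis set $\{\bar\rho_q:q\in\cP_n(\bS)\}\subset\conv(\mathfrak S)$, ensuring the resulting test is $S_n$-invariant (which is automatic for a sum of $P_\lambda$'s, and can otherwise be enforced by group-averaging without harming the type-II error), and sharpening its type-I tolerance to $\nu/(n+1)^{|\bS|}$. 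Applying the symmetrization inequality to $\eins^{\otimes n}-P_n$ transfers this guarantee to every $\rho_{s^n}$ while the type-II exponent against $\sigma^{\otimes n}$ is unchanged.

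\emph{The delicate step} is the direct part of Part 1: matching the \emph{quantum} relative entropy rather than the classical one between the spectra requires rotating the invariant projectors into each $\rho$'s eigenbasis, which breaks unitary invariance and forces the $\delta$-net over $\mathfrak S$. Secondary care is needed when $\conv(\mathfrak S)$ contains states that are not absolutely continuous with respect to $\sigma$, since then $D(\rho\|\sigma)=\infty$ but the infimum may still be finite; truncation or a straightforward approximation argument handles this without affecting the final rate.
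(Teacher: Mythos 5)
The central step of your argument --- the direct part of Part 1 --- has a genuine gap. You build the test from the Schur--Weyl isotypic projectors $P_\lambda$ with $\bar\lambda$ near $\mathrm{spec}(\rho)$ and propose to recover the full quantum relative entropy by ``conjugating $P_\lambda$ into $\rho$'s eigenbasis''. But $P_\lambda$ is invariant under $U^{\otimes n}(\cdot)U^{\dagger\otimes n}$ for every unitary $U$, so this conjugation does nothing, and the bounds $\tr\{P_\lambda\sigma^{\otimes n}\}\le\mathrm{poly}(n)\,2^{-nD(\bar\lambda\|\mathrm{spec}(\sigma))}$ and $\dim V_\lambda\le\mathrm{poly}(n)\,2^{nH(\bar\lambda)}$ can only ever produce the classical relative entropy between the two spectra. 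That quantity is not $D(\rho\|\sigma)$: for distinct non-orthogonal pure states it equals $0$ while $D(\rho\|\sigma)=\infty$. The missing device --- and the heart of the paper's proof --- is the simultaneous decomposition by frequency \emph{and} Young frame: one fixes the eigenbasis of $\sigma$ (not of $\rho$), decomposes each frequency-typical subspace $V_f$ into its $S_n$-isotypic pieces $V_{f,\lambda}$, and sums those $P_{f,\lambda}$ with $\bar f$ near the pinching $\tilde r_\rho$ of $\rho$ to $\sigma$'s eigenbasis and $\bar\lambda$ near $\mathrm{spec}(\rho)$. Then $\tr\{P_{f,\lambda}\sigma^{\otimes n}\}\le\mathrm{poly}(n)\,2^{n(H(\bar\lambda)+\sum_i\bar f(i)\log t_i)}$, and the identity $D(\rho\|\sigma)=-H(\mathrm{spec}(\rho))-\sum_i\tilde r_\rho(i)\log t_i$ delivers the exponent, while the type-I guarantee follows because the $\lambda$-marginal (spectral estimation) and the $f$-marginal (classical type concentration in $\sigma$'s basis) both concentrate. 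Without the $f$-index carrying the cross term $\sum_i\tilde r_\rho(i)\log t_i$, your test cannot achieve $D(\rho\|\sigma)$.

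Two secondary points. Your converse reductions to the ordinary quantum Stein's Lemma are valid and clean (note that for every fixed $\nu\in(0,1)$ you are implicitly invoking the strong converse of Ogawa--Nagaoka), whereas the paper derives the converse from the same $(f,\lambda)$ machinery via a pinching lemma; this is a legitimate difference in route. More seriously, your Part 2 direct argument needs $|\bS|<\infty$ for the robustification factor $(n+1)^{|\bS|}$, but Theorem 1 allows arbitrary $\mathfrak S$, and for the AVQS an inner $\delta$-net does not suffice: you must produce a \emph{finite} set of genuine states whose convex hull \emph{contains} $\mathfrak S$, which can be impossible for sets touching the boundary of $\cs(\mathbb C^d)$. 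The paper handles this by first contracting $\mathfrak S$ towards the maximally mixed state with a depolarizing channel $\cn_{\delta_n}$, taking a finite outer approximation of subexponential size, and controlling the resulting loss in the exponent by continuity of the relative entropy; some such step is unavoidable in your argument as well. Finally, if you form $P_n:=\bigvee_i P_n(\rho_i)$ from tests that are not drawn from a common mutually orthogonal family, the subadditivity of the type-II error you invoke is false in general ($P\vee Q\not\le P+Q$ for projections); building everything from the one orthogonal family $\{P_{f,\lambda}\}$ avoids this.
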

\begin{remark}
The optimal tests are, in both settings, essentially given by projections onto suitably chosen invariant subspaces of the symmetric group.
\end{remark}
The first result in Theorem \ref{thm:mainresult-short-form} is usually referred to as 'Quantum Sanov Theorem'. We will call the second scenario hypothesis testing for an arbitrarily varying quantum source (sometimes abbreviated as AVQS in the following), where the terminology 'arbitrarily varying source' is taken from the classical literature and refers to the sequence $\Delta$ in that setting. This model has been investigated first in \cite{fangwei-shiyi} for the case of two sets of classical probability distributions over a set. Their proof is based on the 'method of types'.
\\\\
The 'method of types' is best described by Csisz\'{a}r in \cite{csiszar-types}. The method is based on a study of the relevant quantities in a given communication scenario on \emph{frequency typical sets}, meaning sets of words of length $n$ that are each composed of only a finite number of symbols and each symbol occurs a constant number of times in each word. It is clear that such sets are invariant under a joint permutation of the symbols in each word. The success of the method can be read off from the fact that it ultimately ended up having its own name. Approaching a given Shannon-information theoretic problem by looking at types can be considered opposed to looking at sets on which one controls the empirical averages of the functions that are of interest in a given problem.\\
The only stringent attempt to generalize the method of types to the quantum setting that is known to the author is presented in \cite{harrow-diss} and \cite{bacon-chuang-harrow}, where the group theoretic structure is presented in great detail. Certainly, also \cite{christandl-thesis} is inspired by that idea.\\
The methods that were further successfully applied often only use spectral estimation, which already reveals important connections between representation theory and quantum states \cite{christandl-mitchison}, \cite{christandl-harrow}, \cite{christandl-horn_and_LW_coefficients}. They are well suited to derive e.g. estimates of the spectra of states through the basic estimate described in \cite{thm2} or entropic inequalities. On a more general level, recent results \cite{schilling-christandl-gross,christandl-sahinoglu-walter,walter-doran-gross-christandl} show that there is an enormously fruitful connection between representation theory and the structure of multiparty states.\\
Unfortunately, the quantum method of types as such was not applied or developed any further and could thus not even be used to give proofs for such basic tasks as hypothesis testing, which are both simple and at the very heart of quantum Shannon information theory. At this point, we clearly distinguish between applying the structure given by representation theory as was successfully done in \cite{hayashi-representation-theory} and a reasoning in terms of anything comparable to classical types. We thus now make an explicit choice of 'types' that singles out certain invariant subspaces of the symmetric group, and then show that with this choice we can prove our Theorem \ref{thm:mainresult-short-form}. We hope that this will give new thrust to the idea of 'quantum types'.\\
At present, it seems not totally clear what such a theory should look like, the relevant parameters will have to be found by subsequently proving the important theorems in quantum Shannon information theory. For the time being, we shall restrict to considering decompositions $\eins^{\otimes n}=\sum_iP_i$ on $(\mathbb C^d)1 {\otimes n}$ where each $P_i$ is an orthogonal projection onto an irreducible subspace of the natural action of $S_n$ on $(\mathbb C^d)1 {\otimes n}$ and further $\tr\{P_iP_j\}=0$ for $i\neq j$. We then consider the set $\{P_i\}_i$ as a POVM, which is to replace the frequency typical subsets from the classical theory. An outline of our approach will now be given.
\\\\
Despite above mentioned drawbacks, the methods from classical information theory were taken over to the quantum case with great success, e.g. in the form of 'frequency typical subspaces' that arise from choosing a particular orthonormal basis that is labelled by a classical alphabet and then looking at the linear span of all elementary tensor products of $n$ of these basis vectors that are built up from the same number of each of the basis vectors. Closely related is the study of isotypical subspaces in the $n$-fold tensor product of $\mathbb C^d$. Like typical sets in the classical case, both of the above classes of subspaces are permutation invariant.\\
This work combines the two approaches by exploiting the fact that every frequency typical subspace can be decomposed into subspaces that are irreducible under the action of the permutation group.\\
As mentioned already, the action of the commutant of $S_n$ on the frequency typical subspaces will be insignificant in every single one of the proofs given here. Its exact role will therefore also be addressed in future work.
\\\\
The basic idea behind the paper is to look, for a given orthonormal basis $\{e_i\}_{i=1}^d$ and a 'frequency' or 'type' (a nonnegative function $f:\{1,\ldots,d\}\to\nn$ satisfying $\sum_if(i)=n$), at representations of the symmetric group $S_n$ on the frequency typical subspaces
\begin{align*}V_f:=\linspan(\{e_{i_1}\otimes\ldots\otimes e_{i_n}:|\{k:i_k=i\}|=f(i)\ \mathrm{for\ all\ }i\}.\end{align*}
Since these are invariant under permutations, they naturally split up into different isotypical subrepresentations $V_{f,\lambda}$ (where $\lambda$ denote Young Tableaux and some $V_{f,\lambda}$ may not contribute to above decomposition, meaning that $V_{f,\lambda}=\{0\}$ for some pairs $(f,\lambda)$)
\begin{align}V_f=\bigoplus_\lambda V_{f,\lambda}.\end{align}
Given a state $\sigma$ with eigenvalues $t_1\geq t_2\geq\ldots\geq t_d>0$, we may now pick one of its eigenbases for the definition of the $V_f$. It is then straightforward to show that for the orthogonal projections $P_{f,\lambda}$ onto the $V_{f,\lambda}$ the estimate
\begin{align}\tr\{P_{f,\lambda}\sigma^{\otimes n}\}&\approx\dim(F_\lambda)\cdot2^{-n\sum_i\frac{1}{n}f(i)\log t_i}\\
&\approx2^{n(H(\frac{1}{n}\lambda)-\sum_i\frac{1}{n}f(i)\log t_i)}
\end{align}
is valid, where $\dim(F_\lambda)$ is the dimension of the irreducible representation of $S_n$ corresponding to $\lambda$ and $H(\frac{1}{n}\lambda)=-\sum_i\frac{1}{n}\lambda_i\log(\frac{1}{n}\lambda_i)$ is the entropy of the normalized Young tableau. It turns out that, for $\lambda\approx n\cdot\mathrm{spec}(\rho)$ and $f(i)\approx n\cdot\langle e_i,\rho e_i\rangle\ \forall\ i\in[d]$ for some arbitrary second state $\rho$ we get
\begin{align}\tr\{P_{f,\lambda}\sigma^{\otimes n}\}&\approx2^{-nD(\rho\|\sigma)}.\end{align}
It then remains to investigate the behaviour of the functions $(f,\lambda)\mapsto\tr\{P_{f,\lambda}\rho^{\otimes n}\}$ and determine where they attain there maximum for fixed $\rho$. Luckily it turns out that the maximum lies indeed around the values we have already chosen above, and this is sufficient for a proof of the direct part of the hypothesis testing problem in both cases.\\
For sake of completeness, we prove the converse parts in both of our problems by resorting to a reasoning that is based on the method we just outlined above. A crucial ingredient to this part of our proof is the permutation invariance of the sets of states that we are trying to discriminate. This enables one to assume that an arbitrary test is permutation invariant as well, and ultimately this provides the link between an arbitrary test and the performance of tests that are made up from projections onto the $P_{f,\lambda}$.
\\\\
\emph{Outline.} The paper is structured as follows. We first fix some basic notation in Section \ref{sec:Notation} and state a few preliminaries in Section \ref{sec:Definitions}.\\
We then state our results in Section \ref{sec:Main Results}, in a very technical fashion compared to Theorem \ref{thm:mainresult-short-form}. Theorems \ref{theorem:main result} and \ref{theorem:main-result-II} provide a proof for Theorem \ref{thm:mainresult-short-form}, while Theorem \ref{theorem:kostka-numbers} makes the connection to the Kostka numbers and Lemma \ref{lemma:state-estimation} provides a more general formula for (spectral) state estimation via projections onto isotypical projections in the sense of \cite{thm2}. Our additional Theorem \ref{theorem:additional-result} is a straightforward generalization of the robustification Lemma in \cite{ahlswede-avc-with-state-seq-known-to-sender}. In this version, it can be abbreviated as: any POVM element on $\mathcal B((\mathbb C^d)^{\otimes n})$ that detects all elementary products states $\rho^{\otimes n}$ with high probability is close to the identity, whence has poor performance when used as POVM element in a binary test for distinguishing between the two hypotheses 'entangled state' versus 'separable state'. The proof of the theorem also reveals that our method could benefit from further refinement.\\
The rest of the paper is devoted to the proofs of these statements, the proof of Lemma \ref{lemma:state-estimation} being implicit in that of Theorem \ref{theorem:main-result-II}.\\
The proof of Theorem \ref{theorem:additional-result} suggests that there is still some important information to be gained by using POVMs that are made up from irreducible projections, whereas before we used only invariant ones.
\end{section}
\begin{section}{\label{sec:Notation}Notation}
All Hilbert spaces are assumed to have finite dimension and are over the field $\cc$. The set of linear operators from $\hr$ to $\hr$ is denoted $\mathcal B(\hr)$. The adjoint of $b\in\mathcal B(\hr)$ is marked by a star and written $b^\ast$.\\
$\cs(\hr)$ is the set of states, i.e. positive semi-definite operators with trace (the trace function on $\mathbb B(\hr)$ is written $\tr$) $1$ acting on the Hilbert space $\hr$. Pure states are given by projections onto one-dimensional subspaces. A vector $x\in\hr$ of length one spanning such a subspace will therefore be referred to as a state vector, the corresponding state will be written
$|x\rangle\langle x|$. For a finite set $\mathbf X$ the notation $\mathfrak{P}(\mathbf X)$ is reserved for the set of probability distributions on $\mathbf X$, and
$|\mathbf X|$ denotes its cardinality. For any $n\in\nn$, we define $\bX^n:=\{(x_1,\ldots,x_n):x_i\in\bX\ \forall i\in\{1,\ldots,n\}\}$, we also write $x^n$ for the elements of $\bX^n$. Given such element, $N(\cdot|x^n)$ denotes its type, and is defined through $N(x|x^n):=|\{i:x_i=x\}|$.\\
For any natural number $L$, we define $[L]$ to be the shortcut for the set $\{1,...,L\}$.\\
The von Neumann entropy of a state $\rho\in\mathcal{S}(\hr)$ is given by
\begin{equation}S(\rho):=-\textrm{tr}(\rho \log\rho),\end{equation}
where $\log(\cdot)$ denotes the base two logarithm which is used throughout the paper.\\
Given two states $\rho,\sigma\in\cs(\mathbb C^d)$, the relative entropy of them is defined as
\begin{align}
D(\rho\|\sigma):=\left\{\begin{array}{l l}\tr\{\rho(\log(\rho)-\log(\sigma)\},&\mathrm{if}\ \supp(\rho)\subset\supp(\sigma),\\\infty,&\mathrm{else} \end{array}\right.
\end{align}
Another way of measuring distance between quantum states is obviously given by using the one-norm, which obeys:
\begin{align}
\|\rho-\sigma\|:=2\max_{0\leq P\leq\eins}\tr\{P(\rho-\sigma)\}
\end{align}
We now fix our notation for representation theoretic objects and state some basic facts.\\
The symbols $\lambda,\mu$ will be used to denote Young frames. The set of Young frames with at most $d\in\nn$ rows and $n\in\nn$ boxes is denoted $\mathbb Y_{d,n}$.\\
For any given $n$, the representation of $S_n$ we will consider is the standard representation on $(\mathbb C^d)^{\otimes n}$ that acts by permuting tensor factors. Throughout, the dimension $d$ of our basic quantum system will remain fixed.\\
The unique complex vector space carrying the irreducible representation of $S_n$ corresponding to a Young Tableau $\lambda$ will be written $F_\lambda$.\\
The multiplicity of an irreducible subspace of our representation corresponding to a Young frame $\lambda$ is denoted $m_{\lambda,n}$, and this quantity can be upper bounded by $m_{\lambda,n}\leq(2n)^{d^2}$ (see \cite{christandl-thesis}).\\
For $\lambda\in\mathbb Y_{d,n}$, $\bar\lambda\in\mathfrak P([d])$ is defined by $\bar\lambda(i):=\lambda_i/n$. If $\rho\in\cs(\mathbb C^d)$ has spectrum $s\in\mathfrak P([d])$ (in case that $\rho$ has degenerate eigenvalues we count them multiple times!), then it
will always be assumed that $s(1)\geq\ldots\geq s(d)$ holds and the distance between a spectrum $s$ and a Young frame $\lambda\in\mathbb Y_{d,n}$ is measured by $\|\bar\lambda-s\|:=\sum_{i=1}^d|\bar\lambda(i)-s(i)|$. The distance between two probability distributions $p,q\in\mathfrak P([d])$ will be measured by $\|p-q\|:=\sum_i|p(i)-q(i)|$.\\
The Kostka numbers $K_{f,\lambda}$ are as defined in e.g. Fulton's book \cite{fulton}, pages 25-26.\\
We now define two important entropic quantities. Given a finite set $\bX$ and two probability distributions $r,s\in\mathfrak P(\bX)$, we define the relative entropy $D(r||s)$ by
\begin{align}
D(r||s):=\left\{\begin{array}{ll}\sum_{x\in \bX}r(x)\log(r(x)/s(x)),&\mathrm{if}\ s\gg r\\ \infty,&\mathrm{else}\end{array}\right.
\end{align}
In case that $D(r||s)=\infty$, for a positive number $a>0$, we use the convention $2^{-aD(r||s)}=0$. The relative entropy is connected to $\|\cdot\|$ by Pinsker's inequality $D(r||s)\leq\alpha\|r-s\|^2$, where $\alpha:=1/2\ln(2)$.\\
The  entropy of $r\in\mathfrak P(\bX)$ is defined by the formula
\begin{align}
H(r):=-\sum_{x\in \bX}r(x)\log(r(x)).
\end{align}
Throughout, we will be having one fixed state $\sigma\in\cs(\mathbb C^d)$ having a (non-unique) decomposition $\sigma=\sum_{i=1}^dt_i|\tilde e_i\rangle\langle\tilde e_i|$ and the pinching of an arbitrary state $\rho\in\cs(\mathbb C^d)$ to the orthonormal basis $\{\tilde e_i\}_{i=1}^d$ will be given by $\sum_{i=1}^d|\tilde e_i\rangle\langle\tilde e_i|\rho|\tilde e_i\rangle\langle\tilde e_i|$ and induces the probability distribution $\tilde r_\rho\in\mathfrak P([d])$ through $\tilde r_\rho(i):=\langle\tilde e_i,\rho\tilde e_i\rangle$. It is important for the understanding of this paper to keep in mind that the equality $D(\rho\|\sigma)=-H(\mathrm{spec}(\rho))-\sum_{i=1}^d\tilde r_\rho(i)\log(t_i)$ holds.\\
We shall also need some notions on convex sets that we borrow from \cite{webster}, and that are suited for dealing with convex sets on finite dimensional normed spaces $(V, ||\cdot||)$ over the field of real or complex numbers. Let $F\subset V$ be convex. $x\in F$ is said to be a relative interior point of $F$ if there is $r>0$ such that $B(x,r)\cap \aff F\subset F$. Here $B(x,r)$ denotes the open ball of radius $r$ with the center $x$ and $\aff F$ stands for the affine hull of $F$. The set of relative interior points of $F$ is called the relative interior of $F$ and is denoted by $\ri F$. The relative boundary of $F$, $\rebd F$, is the set difference between the closure of $F$ and $\ri F$.\\
For a set $A\subset V$ and $\delta\ge 0$ we define the parallel set or the blow-up $(A)_{\delta}$ of $A$ by
\begin{equation}(A)_{\delta}:=\left\{x\in V: ||x-y||\le \delta \textrm{ for some } y\in A  \right\}.  \end{equation}
For a subset $B\subset\mathbb R^n$ or $B\subset\mathbb C^n$ we denote its convex hull by $\conv(B)$.\\
\end{section}
\begin{section}{\label{sec:Definitions}Definitions and preliminary results}
Let $n\in\nn$ be fixed for the moment. The most important technical definition for this work is that of frequency-typical subspaces $V_f$ of $(\mathbb C^d)^{\otimes n}$. These arise from choosing a fixed orthonormal basis $\{e_i\}_{i=1}^d$ of $\mathbb C^d$, choosing a frequency $f$ (a function $f:[d]\to\mathbb N$ satisfying $\sum_{i=1}^df(i)=n$), setting $T_f:=\{(i_1,\ldots,i_n):|\{i_k:i_k=j\}|=f(j)\ \forall j\in[d]\}$, and defining
\begin{align}
V_f:=\linspan(\{e_{i_1}\otimes\ldots\otimes e_{i_n}:(i_1,\ldots,i_n)\in T_f\}).
\end{align}
They have been widely used in quantum informatin theory, but share one very nice property that does not seem to have been exploited yet: They are invariant under permutations. From this property it immediately follows that
\begin{align}
V_f=\bigoplus_\lambda V_{f,\lambda},
\end{align}
where each $V_{f,\lambda}$ is just a direct sum of irreducible representations corresponding to $\lambda$ that is contained entirely within $V_f$.
\\
A fundamental representation theoretic quantity is intimately connected to them: The Kostka numbers. In fact, it holds $K_{f,\lambda}=0\ \Leftrightarrow\ V_{f,\lambda}=\{0\}$
, both by definition of the Kostka numbers and by application of Young symmetrizers as described in \cite{sternberg}, pages 254-258.\\
Also, we are going to employ the following estimate taken from \cite{csiszar-koerner} (Lemma 2.3), which is valid for all frequencies $f:[d]\to\nn$ that satisfy $\sum_{i=1}^df(i)=n$:
\begin{equation}
\frac{1}{(n+1)^d}2^{nH(\overline{f})}\leq|T_f|\leq2^{nH(\overline{f})}\label{eqn2}
\end{equation}
We will also need Lemma 2.7 from \cite{csiszar-koerner}:
\begin{lemma}\label{lemma1}
If, for $\mathbf A$ a finite alphabet and $p,q\in\mathfrak P(\mathbf A)$ we have $|p-q|\leq\Theta\leq1/2$, then
\begin{equation}
 |H(p)-H(q)|\leq-\Theta\log\frac{\Theta}{|\mathbf A|}.
\end{equation}
\end{lemma}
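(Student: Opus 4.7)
The plan is to deduce the lemma from the scalar inequality
\[
|\eta(x)-\eta(y)|\le\eta(|x-y|)\qquad\text{whenever } x,y\in[0,1],\ |x-y|\le 1/2,
\]
where $\eta(t):=-t\log t$ (with $\eta(0):=0$), and then to aggregate over the alphabet $\mathbf A$ by a concavity argument. First I would write
\[
|H(p)-H(q)|\le\sum_{a\in\mathbf A}|\eta(p(a))-\eta(q(a))|
\]
by the triangle inequality. Setting $\delta_a:=|p(a)-q(a)|$ and $\Theta':=\sum_a\delta_a=|p-q|\le\Theta\le 1/2$, the pointwise bound then gives
\[
|H(p)-H(q)|\le\sum_{a\in\mathbf A}\eta(\delta_a).
\]

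Next I would bound the right-hand side by factoring out $\Theta'$ and viewing $(\delta_a/\Theta')_{a\in\mathbf A}$ as a probability distribution on $\mathbf A$:
\[
\sum_{a\in\mathbf A}\eta(\delta_a)
=-\Theta'\log\Theta'+\Theta'\cdot H\!\bigl((\delta_a/\Theta')_{a\in\mathbf A}\bigr)
\le-\Theta'\log\Theta'+\Theta'\log|\mathbf A|
=-\Theta'\log\frac{\Theta'}{|\mathbf A|},
\]
using $H\le\log|\mathbf A|$. Finally, a short check shows that the function $x\mapsto-x\log(x/|\mathbf A|)$ is monotone increasing on $(0,1/2]$ (its derivative $\log(|\mathbf A|/x)-1/\ln 2$ is nonnegative whenever $x\le|\mathbf A|/e$, which is implied by $x\le 1/2$ for any alphabet with $|\mathbf A|\ge 2$; the case $|\mathbf A|=1$ is trivial since both entropies vanish). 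Therefore $-\Theta'\log(\Theta'/|\mathbf A|)\le-\Theta\log(\Theta/|\mathbf A|)$, completing the proof.

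The main obstacle is the pointwise scalar bound $|\eta(x)-\eta(y)|\le\eta(|x-y|)$, because $\eta'(t)=-\log t-1/\ln 2$ is unbounded near $0$, so a naive mean-value estimate fails. I would prove it by assuming w.l.o.g.\ $x\le y$, writing $y=x+u$ with $u=|x-y|\le 1/2$, and splitting into two regimes. In the regime $x+u\le 1/2$ one can use that $\eta$ is concave and increasing on $[0,1/2]$, so $\eta(x+u)-\eta(x)\le\eta(u)-\eta(0)=\eta(u)$ by subadditivity of $\eta$ on $[0,1/2]$ (which follows from concavity together with $\eta(0)=0$). In the regime $x+u>1/2$ one uses the symmetry of $\eta$ about $1/2$ (more precisely, $\eta$ is concave on $[0,1]$ with maximum at $1/e$) together with the fact that $u\le 1/2$ implies $x\ge(x+u)-1/2$, reducing to the previous case after a short calculation. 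Once this lemma is in hand, the rest of the argument is a direct combination of the triangle inequality and Jensen's inequality.
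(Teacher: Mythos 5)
The paper never proves this lemma; it is imported verbatim from Csisz\'ar--K\"orner (Lemma 2.7 there), so there is no in-paper argument to compare against. Your aggregation step is the standard textbook argument and is correct: the triangle inequality, the identity $\sum_a\eta(\delta_a)=-\Theta'\log\Theta'+\Theta'\,H\bigl((\delta_a/\Theta')_a\bigr)$ together with $H\le\log|\mathbf A|$, and the monotonicity of $x\mapsto-x\log(x/|\mathbf A|)$ on $(0,1/2]$ all check out (the degenerate cases $\Theta'=0$ and $|\mathbf A|=1$ being trivial, as you note).

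The genuine gap is in your proof of the scalar inequality $|\eta(x)-\eta(y)|\le\eta(|x-y|)$, which you correctly identify as the crux. Two of the facts you lean on are false: $\eta$ is not increasing on $[0,1/2]$ (its maximum is at $1/e\approx0.37$), and $\eta$ is not symmetric about $1/2$ (e.g.\ $\eta(0.1)\ne\eta(0.9)$) --- you half-acknowledge the latter but still defer the second regime to ``a short calculation'' that is never supplied. Moreover, the split by whether $x+u\le1/2$ does not track the real difficulty, which is the sign of $\eta(x+u)-\eta(x)$: subadditivity only bounds this difference from \emph{above}, so even in your first regime you have not controlled $|\eta(x+u)-\eta(x)|$ when $\eta(x+u)<\eta(x)$ (which happens already for $x=0.38$, $u=0.1$). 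The clean repair uses concavity twice, via the fact that increments of a concave function are nonincreasing in the base point: comparing base points $0$ and $x$ gives $\eta(x+u)-\eta(x)\le\eta(u)-\eta(0)=\eta(u)$, while comparing base points $x$ and $1-u$ gives $\eta(1)-\eta(1-u)\le\eta(x+u)-\eta(x)$, i.e.\ $\eta(x)-\eta(x+u)\le\eta(1-u)$; finally $\eta(1-u)\le\eta(u)$ for $u\le1/2$, since $g(u):=\eta(u)-\eta(1-u)$ vanishes at $0$ and $1/2$ and $g'(u)=-\log\bigl(u(1-u)\bigr)-2\log e$ has a single zero in $(0,1/2)$, so $g\ge0$ there. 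With that substitution your argument becomes a complete and correct proof.
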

Another very important estimate is the following one (a derivation can e.g. be found in \cite{noetzel-2ptypicality}):
\begin{equation}
 2^{n(H(\bar\lambda)-\frac{2d^6}{n}\log(2n))}\leq\dim F_\lambda\leq 2^{nH(\bar\lambda)}\qquad (\lambda\in\mathbb Y_{d,n}).\label{eqn4}
\end{equation}
\end{section}
\begin{section}{\label{sec:Main Results}Main Results}
All results are stated with the underlying Hilbert space being $\mathbb C^d$. Our main result is the finding of a sequence of projections that succeeds in detecting each one of a set $\mathfrak S=\{\rho_s\}_{s\in\bS}$ of quantum states with probability going to one, while having lowest possible probability of detecting another state $\sigma\notin\mathfrak S$. We only give the proof in the case of a nonsingular $\sigma$, since the singular case can in dependence of the set $\{\rho_s\}_{s\in\bS}$ either be treated identically by reduction of the dimension of the underlying system or because an optimal test is already given by projecting onto the usual frequency typical subspaces corresponding to $\sigma$. This reasoning applies to the case of the AVQS as well.\\
Our technical formulation of the quantum Sanov theorem for i.i.d. hypotheses reads as follows.
\begin{theorem}\label{theorem:main result} Let $\mathfrak S=\{\rho_s\}_{s\in\bS}\subset\mathcal S(\mathbb C^d)$ and $\sigma=\sum_{i=1}^dt_i|\tilde e_i\rangle\langle\tilde e_i|\in\cs(\mathbb C^d)$ where all $t_i>0$ be given. We will use the abbreviations $r_s:=r_{\rho_s}$ for the spectra of the states of our set and $\tilde r_s:=\tilde r_{\rho_s}$ for the distributions that are induced by pinching them to the eigenbasis of $\sigma$. Let a sequence $(P_n)_{n\in\nn}$ of projections defined by
\begin{align}
P_n:=\sum_{(\bar f,\bar\lambda)\in\Lambda_{\eps}}P^n_{f,\lambda},
\end{align}
where the frequency typical subspaces are defined with respect to $\{\tilde e_i\}_{i=1}^d$ and
\begin{align}
\Lambda_{\eps}&:=\{(p,q)\in\mathfrak P([d]\times[d])\ \left|\right.\ \exists\ s\in\bS\ :\ \|p-\tilde r_s\|\leq\eps\ \mathrm{and}\ \|q-r_s\|\leq\eps\}
\end{align}
and the $\tilde r_s$ are the probability distributions defined by $\tilde r_s(i):=\langle\tilde e_i,\rho_s\tilde e_i\rangle$. For this sequence, both
\begin{align}\label{eqn-1}
\inf_{s\in\bS}\tr\{P_n\rho_s^{\otimes n}\}\geq1-2^{-n(\alpha\eps^2-\frac{d^2}{n}\log(2n))}
\end{align}
and
\begin{align}\label{eqn-2}
\frac{1}{n}\log\tr\{P_n\sigma^{\otimes n}\}\leq-\inf_{s\in\bS}D(\rho_s\|\sigma)+\Theta(n,\eps,d,\sigma)
\end{align}
are true. In addition, for every $0\leq A\leq\eins^{\otimes n}$ satisfying $\inf_{s\in\bS}\tr\{A_n\rho_s^{\otimes n}\}\geq1-\nu$ (for some $\nu\in(0,1)$) it holds that
\begin{align}
-\inf_{s\in\bS}D(\rho_s\|\sigma)-\Theta'(n,\nu,d)\leq\frac{1}{n}\log\tr\{A_n\sigma^{\otimes n}\}.
\end{align}
The functions $\Theta$ and $\Theta'$ are defined by
\begin{align}
&\Theta(n,\eps,d,\sigma):=\frac{d^2}{n}\log(2n)+\eps|\log(\eps/d)|+d\eps\cdot\max_i|\log t_i|,\\
&\Theta'(n,\nu,d):=\Theta(n,n^{-\frac{1}{4}},d)-\frac{2d^6}{n}\log(2n)+\frac{1}{n}\log(\frac{1-\nu-2^{-\alpha\sqrt{n}}}{(2n)^{2d^2}}).
\end{align}
This proves the quantum Stein's Lemma by replacing $\eps$ with a suitably chosen sequence $(\eps_n)_{n\in\nn}$ that satisfies $\eps_n=\frac{1}{\sqrt{\alpha}}\sqrt{\frac{1}{n}\log(\nu)+\frac{d^2}{n}\log(2n)}$ for all $n\in\nn$.
\end{theorem}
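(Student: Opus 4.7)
The plan is to prove the three inequalities separately. The two direct statements \eqref{eqn-1} and \eqref{eqn-2} will follow from explicit computation using the structure of $P_n$; the converse will require a permutation-symmetry reduction combined with Hayashi's pinching inequality and Schur--Weyl duality.

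For \eqref{eqn-1} (Type~I error), I would expand $\eins^{\otimes n}-P_n=\sum_{(\bar f,\bar\lambda)\notin\Lambda_\eps}P_{f,\lambda}$ and observe that for each such pair and each $s\in\bS$ either $\|\bar f-\tilde r_s\|>\eps$ or $\|\bar\lambda-r_s\|>\eps$. Since $P_{f,\lambda}$ is dominated both by the projection onto $V_f$ and by the $\lambda$-isotypical projection on $(\mathbb C^d)^{\otimes n}$, $\tr\{P_{f,\lambda}\rho_s^{\otimes n}\}$ is no larger than either the classical type probability for the pinched distribution $\tilde r_s$ (bounded by $2^{-nD(\bar f\|\tilde r_s)}$) or the spectral estimation probability (bounded by $\mathrm{poly}(n)\cdot 2^{-nD(\bar\lambda\|r_s)}$ as in \cite{thm2}). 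Pinsker's inequality turns the $\eps$-gap into a $2^{-n\alpha\eps^2}$ factor, and summing over the $O(n^{2d})$ pairs gives the bound.

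For \eqref{eqn-2} (Type~II error), the key point is that $\sigma^{\otimes n}$ acts as the scalar $2^{n\sum_i\bar f(i)\log t_i}$ on $V_f\supseteq V_{f,\lambda}$, so $\tr\{P_{f,\lambda}\sigma^{\otimes n}\}=\dim(V_{f,\lambda})\cdot 2^{n\sum_i\bar f(i)\log t_i}$. I would bound $\dim(V_{f,\lambda})=K_{f,\lambda}\dim(F_\lambda)\leq(2n)^{d^2}\cdot 2^{nH(\bar\lambda)}$ using $K_{f,\lambda}\leq m_{\lambda,n}$ and \eqref{eqn4}. For $(\bar f,\bar\lambda)\in\Lambda_\eps$ with witness $s$, Lemma~\ref{lemma1} controls $H(\bar\lambda)-H(r_s)$ and a direct estimate controls $\sum_i(\bar f(i)-\tilde r_s(i))\log t_i$; invoking the identity $D(\rho_s\|\sigma)=-H(r_s)-\sum_i\tilde r_s(i)\log t_i$ folds the exponent into $-D(\rho_s\|\sigma)+O(\eps)$ plus polynomial corrections. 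Summing over types completes the bound.

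For the converse, I would first symmetrize $A\mapsto A':=(n!)^{-1}\sum_{\pi\in S_n}\pi A\pi^*$; this preserves both $\tr\{A\rho_s^{\otimes n}\}$ and $\tr\{A\sigma^{\otimes n}\}$ by permutation invariance of the states, so WLOG $A$ is permutation invariant. Because the $P_{f,\lambda}$'s mutually commute and each commutes with $\sigma^{\otimes n}$,
\begin{equation*}
\tr\{A\sigma^{\otimes n}\}=\sum_{f,\lambda}2^{n\sum_i\bar f(i)\log t_i}\,\tr\{P_{f,\lambda}A\}.
\end{equation*}
To lower-bound the coefficient sum I would apply Hayashi's pinching inequality $\rho_s^{\otimes n}\leq N_n\,\cP(\rho_s^{\otimes n})$ with $\cP(X)=\sum_{f,\lambda}P_{f,\lambda}XP_{f,\lambda}$, giving $\sum_{f,\lambda}\tr\{P_{f,\lambda}AP_{f,\lambda}\rho_s^{\otimes n}\}\geq(1-\nu)/N_n$; the atypical contribution is bounded by $2^{-\alpha\sqrt n}$ via \eqref{eqn-1} applied with $\eps_n=n^{-1/4}$. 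For typical $(f,\lambda)$, H\"older bounds each summand by $\tr\{P_{f,\lambda}A\}\cdot\|P_{f,\lambda}\rho_s^{\otimes n}P_{f,\lambda}\|_\infty$, and the decisive representation-theoretic input is that, by Schur--Weyl, $\rho_s^{\otimes n}$ acts on the isotypical subspace $F_\lambda\otimes U_\lambda$ as $I_{F_\lambda}\otimes\pi_\lambda(\rho_s)$, with top eigenvalue $\prod_i r_s(i)^{\lambda_i}\approx 2^{-nH(r_s)}$ when $\bar\lambda\approx r_s$. Solving for $\sum_{(f,\lambda)\in\Lambda_{n^{-1/4}}}\tr\{P_{f,\lambda}A\}$ yields a factor $2^{nH(r_s)}$ that precisely cancels the $-H(r_s)$ in $\sum_i\bar f(i)\log t_i\approx -D(\rho_s\|\sigma)-H(r_s)$, giving the claimed lower bound on $\tr\{A\sigma^{\otimes n}\}$ for each $s\in\bS$, and hence the bound with $\inf_s D(\rho_s\|\sigma)$ after supremizing over $s$.

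The hard part will be the converse: $A$ generically fails to commute with the $P_{f,\lambda}$'s (handled by pinching at polynomial cost), and the crucial cancellation rests on the Schur--Weyl identification of $\rho_s^{\otimes n}$ on each isotypical component together with the highest-weight eigenvalue formula $\|\pi_\lambda(\rho_s)\|_\infty=\prod_i r_s(i)^{\lambda_i}$. The remaining errors are routine polynomial corrections tracked via Lemma~\ref{lemma1}, Pinsker, \eqref{eqn4}, and type counting.
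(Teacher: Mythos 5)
Your two direct bounds follow the paper's own route almost verbatim: \eqref{eqn-2} via $\tr\{P_{f,\lambda}\sigma^{\otimes n}\}=\dim(V_{f,\lambda})\,2^{n\sum_i\bar f(i)\log t_i}$ with $\dim(V_{f,\lambda})\leq m_{\lambda,n}\dim(F_\lambda)$, the upper bound in \eqref{eqn4}, and the continuity estimates; and \eqref{eqn-1} via $P_{f,\lambda}\leq\min\{P_f,P_\lambda\}$, the classical type bound, the spectral estimate of \cite{thm2}, and Pinsker. The converse, however, is a genuinely different argument from the paper's. The paper pinches with the \emph{irreducible} projections $P_{f,\lambda,j}$, invokes Schur's lemma to conclude $P_{f,\lambda,j}AP_{f,\lambda,j}=c_{f,\lambda,j}P_{f,\lambda,j}$, locates a single typical triple with $c_{f,\lambda,j}\geq \mathrm{poly}(n)^{-1}(1-\nu-2^{-\alpha\sqrt n})$, and extracts the exponent from $\tr\{P_{f,\lambda,j}\sigma^{\otimes n}\}=\dim(F_\lambda)2^{n\sum_i\bar f(i)\log t_i}$ together with the lower bound in \eqref{eqn4}, so that $H(r_s)$ enters through $\dim F_\lambda$. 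You instead pinch only to the coarser blocks $P_{f,\lambda}$, use H\"older plus the highest-weight bound $\|P_{f,\lambda}\rho_s^{\otimes n}P_{f,\lambda}\|_\infty\leq\|P_\lambda\rho_s^{\otimes n}P_\lambda\|_\infty=\prod_i r_s(i)^{\lambda_i}$ to lower-bound the aggregate weight $\sum_{\mathrm{typ}}\tr\{P_{f,\lambda}A\}$ by roughly $2^{nH(r_s)}$, and feed that into $\tr\{A\sigma^{\otimes n}\}=\sum_{f,\lambda}2^{n\sum_i\bar f(i)\log t_i}\tr\{P_{f,\lambda}A\}$; here $H(r_s)$ enters through the operator norm rather than the dimension. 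Both routes are sound and give the same exponent up to polynomial prefactors (which neither you nor the paper tracks exactly; the paper's own proof of \eqref{eqn-1} actually yields $2d^2$ in place of the stated $d^2$). Your version buys freedom from irreducible decompositions and Schur's lemma at the price of one additional standard fact, the highest-weight eigenvalue formula. One small point to patch: when converting $\prod_i r_s(i)^{\lambda_i}$ into $2^{-nH(r_s)+o(n)}$ for $\|\bar\lambda-r_s\|\leq n^{-1/4}$, do not expand $\sum_i(\bar\lambda_i-r_s(i))\log r_s(i)$ termwise, since $\rho_s$ may have arbitrarily small or vanishing eigenvalues; instead use $\sum_i\bar\lambda_i\log r_s(i)=-H(\bar\lambda)-D(\bar\lambda\|r_s)\leq-H(\bar\lambda)$ and then Lemma \ref{lemma1} to pass from $H(\bar\lambda)$ to $H(r_s)$.
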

\begin{remark}
By linearity of the trace, this implies that we can distinguish the hypotheses $\conv(\{\rho_s^{\otimes n}\}_{s\in\bS})$ and $\sigma^{\otimes n}$ at the same rate.
\end{remark}
Using Theorem \ref{theorem:main result}, we can prove the following result that relates the Kostka numbers to quantum states:
\begin{theorem}\label{theorem:kostka-numbers}
First, if $K_{f,\lambda}>0$ then there exists a quantum state with spectrum $\bar\lambda$ and pinching $\bar f$.\\
Second, if $\rho$ is a quantum state with spectrum $r_\rho$ and pinching $\tilde r_\rho$, then for every $\eps>0$ there exists $N\in\nn$ and for all $n\geq N$ there is $\lambda\in\mathbb Y_{d,n}$ and a corresponding frequency such that $\|\bar\lambda-r_\rho\|\leq n^{-1/4}$ and $\|\bar f-\tilde r_\rho\|\leq n^{-1/4}$ and $K_{f,\lambda}>0$.
\end{theorem}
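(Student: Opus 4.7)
The plan is to treat the two assertions separately and by very different means. Part 2 is a direct specialization of Theorem~\ref{theorem:main result} to a one-element hypothesis. Part 1 reduces to the combinatorial content of the Kostka numbers (a dominance inequality) combined with the Schur--Horn theorem, and is essentially independent of the hypothesis testing machinery.

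For Part 2 I would apply Theorem~\ref{theorem:main result} with $\bS:=\{1\}$, $\mathfrak S:=\{\rho\}$, and $\sigma:=\tfrac{1}{d}\sum_{i=1}^d|\tilde e_i\rangle\langle\tilde e_i|$, so that $\sigma$ is full rank and has the pinching basis $\{\tilde e_i\}_{i=1}^d$ of $\rho$ as an eigenbasis, as required. Choose $\eps_n:=n^{-1/4}$; then $\Lambda_{\eps_n}$ consists exactly of those pairs $(p,q)$ with $\|p-\tilde r_\rho\|\leq n^{-1/4}$ and $\|q-r_\rho\|\leq n^{-1/4}$. Inequality~\eqref{eqn-1} gives
\begin{equation*}
\tr\{P_n\rho^{\otimes n}\} \;\geq\; 1 - 2^{-(\alpha\sqrt{n}-d^2\log(2n))},
\end{equation*}
which tends to $1$. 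Hence for every sufficiently large $n$ the operator $P_n$ is nonzero, so at least one summand $P^n_{f,\lambda}$ with $(\bar f,\bar\lambda)\in\Lambda_{\eps_n}$ is nonzero. By the equivalence $V_{f,\lambda}\neq\{0\}\Leftrightarrow K_{f,\lambda}>0$ noted in Section~\ref{sec:Definitions}, the resulting pair $(f,\lambda)$ satisfies all the requirements of Part 2.

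For Part 1 I would use the classical dominance-order characterization of Kostka numbers (see e.g.\ Fulton's book referenced in Section~\ref{sec:Notation}): $K_{f,\lambda}>0$ if and only if the nonincreasing rearrangement $f^\ast$ of $f$ satisfies $\sum_{i\leq k}f^\ast(i)\leq\sum_{i\leq k}\lambda_i$ for every $k\in[d]$, with equality at $k=d$. Normalizing by $n$ this is precisely the majorization $\bar f\prec\bar\lambda$. The Schur--Horn theorem then yields a positive unit-trace Hermitian operator on $\mathbb C^d$ with spectrum $\bar\lambda$ and diagonal $\bar f$ in the basis $\{\tilde e_i\}_{i=1}^d$, which is the desired quantum state.

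The only bookkeeping in Part 2 is that $\eps_n=n^{-1/4}$ must simultaneously force the exponent in \eqref{eqn-1} to diverge and force $\Lambda_{\eps_n}$ to collapse to $\{(\tilde r_\rho,r_\rho)\}$, both of which it does. In Part 1 the whole content rests on the dominance characterization of Kostka numbers; a more intrinsically quantum derivation via a subsequential limit of the states $P_{kf,k\lambda}/\tr(P_{kf,k\lambda})$ as $k\to\infty$ looks tempting, but the one-site reduction of such states has spectrum $\bar f$ rather than $\bar\lambda$ in small test cases, so this route would need an additional ingredient that I would not attempt to construct here.
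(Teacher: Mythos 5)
Your proposal is correct and follows essentially the same route as the paper: Part 1 via the dominance characterization of $K_{f,\lambda}$ plus Horn's theorem (the paper invokes Horn's Lemma on doubly stochastic matrices to build the unitary realizing the pinching, which is the same content as Schur--Horn), and Part 2 by applying the detection bound of Theorem \ref{theorem:main result} with $\sigma=\frac{1}{d}\eins$ and $\eps_n=n^{-1/4}$ to conclude that some $P_{f,\lambda}\neq 0$, hence $K_{f,\lambda}>0$, with $(\bar f,\bar\lambda)$ in the required neighbourhood. The only cosmetic difference is that the paper separates off the case $\rho=\frac{1}{d}\eins$, which your argument handles uniformly.
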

Our next result concerns the arbitrarily varying quantum source. The proof of this theorem rests on the proof of Theorem \ref{theorem:main result}, so that the functions and objects defined in that theorem are used here as well:
\begin{theorem}\label{theorem:main-result-II}
Under the same preliminaries as above we have: For a finite set $\mathfrak S'=\{\rho_s\}_{s\in\bS}$ and with the projections $P_n$ defined as in Theorem \ref{theorem:main result} using the set $\mathfrak S:=\conv(\mathfrak S')$, the estimates
\begin{align}
\tr\{P_n\rho_{s^n}\}\geq1-2^{-n\alpha\eps^2+\frac{2d^2+|\bS|}{n}\log(2n)}
\end{align}
and
\begin{align}
\frac{1}{n}\log\tr\{P_n\sigma^{\otimes n}\}\leq-\min_{\rho\in\conv(\mathfrak S')}D(\rho\|\sigma)+\Theta(n,\eps,d,\sigma)+\frac{|\bS|}{n}\log(2n)
\end{align}
and for every operator $0\leq A_n\leq\eins^{\otimes n}$ satisfying $\min_{s^n\in\bS^n}\tr\{A_n\rho_{s^n}\}\geq1-\nu$ for some $\nu\in(0,1)$ we have
\begin{align}
-\min_{\rho\in\conv(\mathfrak S')}D(\rho\|\sigma)-\Gamma(n,\nu,d,\sigma)\leq\frac{1}{n}\log\tr\{A_n\sigma^{\otimes n}\}
\end{align}
where
\begin{align}
&\Gamma(n,\nu,d,\sigma):=\Theta(n,n^{-1/4},d,\sigma)+\Gamma'(n,\nu,d)+\frac{8d^6}{n}\log(2n)\\
&\Gamma'(n,\nu,d):=\frac{1}{n}\log[(1-\nu-2^{-n[\alpha(n^{-1/4}-2|\bS|n^{-1})-\frac{|\bS|-2d^2}{n}\log(2n)]})(2n)^{-8d^2}].
\end{align}
This proves Theorem \ref{thm:mainresult-short-form} by replacing $\eps$ with $\eps_n:=\frac{1}{\sqrt{\alpha}}\sqrt{\frac{1}{n}\log(\nu)+\frac{d^2+|\bS|}{n}\log(2n)}$.\\
In case that $|\bS|=\infty$, the following is true: For the sequence $(\delta_n)_{n\in\nn}$ defined by $\delta_n:=12n^{-1/4d^2}$ ($n\in\nn$) and the depolarizing channel $\cn_{\delta_n}:\mathcal B(\mathbb C^d)\to\mathcal B(\mathbb C^d)$, $a\mapsto(1-\delta_n)a+\delta_n\frac{1}{d}\eins$ we have
\begin{align}
\inf_{s^n\in\bS^n}\tr\{\cn_{\delta_n}^{\otimes n}(P_n)\rho_{s^n}\}\geq1-2^{-n(\alpha\eps^2-\frac{d^2+\sqrt{n}}{n}\log(2n))}
\end{align}
and
\begin{align}
\tr\{\cn_{\delta_n}^{\otimes n}(P_n)\sigma^{\otimes n}\}\leq2^{-n(\inf_{\rho\in\conv(\mathfrak S')}D(\rho\|\sigma)+d\cdot\log(1-\delta_n)-\Theta(n,\eps,d,\sigma))}
\end{align}
as well as, for every $\nu\in[0,1)$ and $0\leq A_n\leq\eins^{\otimes n}$: If $\inf_{s^n\in\bS^n}\tr\{A_n\rho_{s^n}\}\geq1-\nu$, then
\begin{align}
\tr\{A_n\sigma^{\otimes n}\}\leq2^{-n(\inf_{\rho\in\conv(\mathfrak S')}D(\rho\|\sigma)-\Gamma(n,\nu,d,\cn_{\delta_n}(\sigma),\sqrt{n})-c(\delta_n))}.
\end{align}
where $\tau\log(d)+h(\tau)+\tau\log(1/t_d)=:c(\tau,d,\sigma)$. Note that, for an arbitrary $\nu\in(0,1)$, we have $\lim_{n\to\infty}\Gamma(n,\nu,d,\cn_{\delta_n}(\sigma),\sqrt{n})=0$ and that also $\lim_{n\to\infty}c(\delta_n,d,\sigma)=0$.
This proves Theorem \ref{thm:mainresult-short-form} by replacing $\eps$ with $\eps_n=\sqrt{1}{\alpha}\sqrt{\frac{1}{n}\log(\nu)+\frac{d^2+\sqrt{n}}{n}\log(2n)}$.
\end{theorem}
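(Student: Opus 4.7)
The plan is to reduce everything to the i.i.d.\ Theorem \ref{theorem:main result} applied to the convex set $\mathfrak S:=\conv(\mathfrak S')$, by exploiting permutation symmetry of the constructed projection $P_n$ and of any competing test. The upper bound on $\tr\{P_n\sigma^{\otimes n}\}$ is essentially immediate: since $P_n$ is built by the same prescription as in Theorem \ref{theorem:main result} but indexed by pairs $(\bar f,\bar\lambda)$ realized by some $\rho\in\conv(\mathfrak S')$, applying the estimate \eqref{eqn-2} gives the stated bound, with the extra $\frac{|\bS|}{n}\log(2n)$ term accounting for the additional counting when covering the simplex of empirical types on $\bS$.

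The core new ingredient is the detection bound $\tr\{P_n\rho_{s^n}\}\geq 1-\text{small}$ uniformly in $s^n$. Here I would use the symmetrization trick. For any $s^n\in\bS^n$ define the empirical state $\bar\rho_{s^n}:=\tfrac{1}{n}\sum_{i=1}^n\rho_{s_i}\in\conv(\mathfrak S')$ and note that $\bar\rho_{s^n}^{\otimes n}$ is a convex combination of the tensor products $\rho_{s'^n}$ with weights proportional to $\prod_s p_{s^n}(s)^{N(s|s'^n)}$. Restricting this sum to $s'^n$ in the type class of $s^n$ and using the standard type-class lower bound (equation \eqref{eqn2} applied to the $|\bS|$-letter alphabet) yields
\begin{align}
\bar\rho_{s^n}^{\otimes n}\ \geq\ \frac{1}{(n+1)^{|\bS|}}\cdot\frac{1}{|T_{N(\cdot|s^n)}|}\sum_{\pi\in S_n}\pi\,\rho_{s^n}\,\pi^{-1}.
\end{align}
Because $P_n$ (and hence $\eins^{\otimes n}-P_n$) is permutation-invariant, $\tr\{(\eins^{\otimes n}-P_n)\rho_{s^n}\}$ equals its value on the right-hand symmetrization. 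Applying \eqref{eqn-1} of Theorem \ref{theorem:main result} to $\bar\rho_{s^n}^{\otimes n}$ then gives the claimed bound $1-2^{-n\alpha\eps^2+\frac{2d^2+|\bS|}{n}\log(2n)}$ after collecting the polynomial prefactor.

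For the converse, given any $0\leq A_n\leq\eins^{\otimes n}$ with $\min_{s^n}\tr\{A_n\rho_{s^n}\}\geq 1-\nu$, I symmetrize $A_n\mapsto\bar A_n:=\tfrac{1}{n!}\sum_\pi\pi A_n\pi^{-1}$. Since permuting $\rho_{s^n}$ gives $\rho_{\pi(s^n)}$, the bound $\tr\{\bar A_n\rho_{s^n}\}\geq 1-\nu$ is preserved, while $\tr\{\bar A_n\sigma^{\otimes n}\}=\tr\{A_n\sigma^{\otimes n}\}$. Inverting the estimate from the previous paragraph yields $\tr\{\bar A_n\bar\rho^{\otimes n}\}\geq(n+1)^{-|\bS|}(1-\nu)$ whenever $\bar\rho=\sum_s (f_s/n)\rho_s$ for some type $f$, hence (by density of types in the simplex and continuity) essentially for every $\bar\rho\in\conv(\mathfrak S')$. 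Specializing to the minimizer of $D(\,\cdot\,\|\sigma)$ over $\conv(\mathfrak S')$ and invoking the i.i.d.\ converse of Theorem \ref{theorem:main result} with effective confidence $1-\nu'$, $\nu':=1-(n+1)^{-|\bS|}(1-\nu)$, produces the stated $\Gamma$, after bookkeeping the $\Theta(n,n^{-1/4},d,\sigma)$ and the $(2n)^{-8d^2}$ factors coming from $\Theta'$ in Theorem \ref{theorem:main result}.

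For the infinite-alphabet case, the idea is to use $\cn_{\delta_n}^{\otimes n}$ to push every state $\rho_{s_i}$ into the interior of the simplex by distance $\delta_n$. By self-adjointness of $\cn_{\delta_n}$, $\tr\{\cn_{\delta_n}^{\otimes n}(P_n)\rho_{s^n}\}=\tr\{P_n\bigotimes_i\cn_{\delta_n}(\rho_{s_i})\}$, and the family $\{\cn_{\delta_n}(\rho)\}_{\rho\in\cs(\cc^d)}$ admits a $\delta_n$-net of polynomial (in $n$) cardinality in trace distance; this effectively replaces $|\bS|$ by $\sqrt{n}$ in all the previous estimates. On the $\sigma$-side, $\cn_{\delta_n}^{\otimes n}(\sigma^{\otimes n})=\cn_{\delta_n}(\sigma)^{\otimes n}$, and continuity of the relative entropy on the interior (controlled by $c(\delta_n,d,\sigma)=\delta_n\log d+h(\delta_n)+\delta_n\log(1/t_d)\to 0$) absorbs the perturbation into a vanishing correction. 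The main obstacle I anticipate is the careful matching of all error exponents in $\Theta$, $\Gamma'$ and $\Gamma$, in particular making the $|\bS|$-dependent and $\delta_n$-dependent terms cancel against the polynomial slack in \eqref{eqn2} and \eqref{eqn4}; the symmetrization arguments themselves are straightforward once the type-class inequality above is in place.
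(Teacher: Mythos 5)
Your proposal is correct, and for the direct (detection) part and the infinite-$\bS$ reduction it coincides with the paper's proof: the paper uses exactly your type-class symmetrization, namely $\tr\{(\eins-P_n)\bar\rho_{s^n}^{\otimes n}\}\geq p^{\otimes n}(T_{N(\cdot|s^n)})\,\tr\{(\eins-P_n)\rho_{s^n}\}\geq(2n)^{-|\bS|}\tr\{(\eins-P_n)\rho_{s^n}\}$ together with permutation invariance of $P_n$ and the i.i.d.\ bound \eqref{eqn-1} applied to $\conv(\mathfrak S')$, and it handles $|\bS|=\infty$ by the same depolarization-plus-$\delta_n$-net device with the Fannes--Audenaert correction $c(\delta_n,d,\sigma)$. (Minor slip: in your displayed operator inequality the prefactor of the sum over permutations should be $1/n!$, not $1/|T_{N(\cdot|s^n)}|$; what you actually use afterwards is the correct trace inequality.) Where you genuinely diverge is the converse for finite $\bS$. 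The paper symmetrizes $A_n$, applies Hayashi's pinching lemma to decompose it over irreducible projections $P_{f,\lambda,j}$, proves a spectral-estimation bound $\tr\{P_\lambda\rho_{s^n}\}\leq(2n)^{|\bS|+d^2}2^{-nD(\bar\lambda\|\mathrm{spec}(\bar\rho))}$ for non-i.i.d.\ product states (this is how Lemma \ref{lemma:state-estimation} arises as a byproduct), and then locates one well-placed triple $(f,\lambda,j)$ carrying a large coefficient. You instead reduce to the i.i.d.\ converse of Theorem \ref{theorem:main result}: since $\bar\rho^{\otimes n}=\sum_{s^n}p^{\otimes n}(s^n)\rho_{s^n}$ is a convex combination of the $\rho_{s^n}$, the hypothesis $\min_{s^n}\tr\{A_n\rho_{s^n}\}\geq1-\nu$ already gives $\tr\{A_n\bar\rho^{\otimes n}\}\geq1-\nu$ for \emph{every} $\bar\rho\in\conv(\mathfrak S')$ --- note your detour through type classes and the factor $(n+1)^{-|\bS|}$ is unnecessary here, since convexity gives the bound with no loss and for all of $\conv(\mathfrak S')$, not only type-induced mixtures --- and Theorem \ref{theorem:main result}'s converse then applies verbatim to the set $\conv(\mathfrak S')$. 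Your route is shorter and yields an error term of the form $\Theta'(n,\nu,d)$, which is asymptotically of the same (vanishing) order as, and in fact slightly tighter than, the paper's $\Gamma$; the exact constants displayed in the theorem can only be reproduced by following the paper's pinching argument, but they are proof-artifacts rather than essential. What your approach gives up is the by-product Lemma \ref{lemma:state-estimation}, which the paper explicitly wants to extract from this proof.
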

\begin{remark}
By linearity of the trace, this implies that we can distinguish the hypotheses $\conv(\{\rho_{s^n}\}_{s^n\in\bS^n})$ and $\sigma^{\otimes n}$ at the same rate.
\end{remark}
The proof of Theorem \ref{theorem:main-result-II} reveals an interesting Lemma, which is a generalization of the well-known concentration theorem for spectral estimation \cite{thm2}. We state it here exclusively for sake of completeness:
\begin{lemma}\label{lemma:state-estimation} Let $\rho_{s^n}:=\otimes_{i=1}^n\rho_{s_i}$ for a finite set $\{\rho_s\}_{s\in\bS}$ of states. Let $\bar\rho:=\sum_s\frac{1}{n}N(s|s^n)\rho_s$ and let $\lambda\in\mathbb Y_{d,n}$. Then
\begin{align}
\tr\{P_\lambda\rho_{s^n}\}\leq(2n)^{|\bS|+d^2}2^{-n(D(\bar\lambda\|\mathrm{spec}(\bar\rho))}.
\end{align}
\end{lemma}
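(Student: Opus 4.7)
The plan is to deduce Lemma~\ref{lemma:state-estimation} from the standard i.i.d.\ concentration theorem for spectral estimation, applied to the auxiliary state $\bar\rho^{\otimes n}$. The key observation is that $P_\lambda$ lies in the commutant of the $S_n$-action on $(\mathbb C^d)^{\otimes n}$, so $\tr\{P_\lambda\rho_{s^n}\}=\tr\{P_\lambda\rho_{\pi(s^n)}\}$ for every $\pi\in S_n$; in particular the trace depends only on the type $p:=N(\cdot|s^n)/n\in\mathfrak P(\bS)$ of $s^n$.

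First I would expand
\begin{align*}
\bar\rho^{\otimes n}=\Big(\sum_{s\in\bS}p(s)\rho_s\Big)^{\otimes n}=\sum_{t^n\in\bS^n}\Big(\prod_{i=1}^n p(t_i)\Big)\rho_{t^n}.
\end{align*}
For every $t^n$ whose type equals $p$ the coefficient is $\prod_s p(s)^{np(s)}=2^{-nH(p)}$. Writing $T_p$ for the set of all such $t^n$, dropping the (nonnegative) contributions from the other type classes, and using the Csisz\'ar--K\"orner lower bound $|T_p|\geq(n+1)^{-|\bS|}2^{nH(p)}$ from (\ref{eqn2}),
\begin{align*}
\bar\rho^{\otimes n}\geq 2^{-nH(p)}\sum_{t^n\in T_p}\rho_{t^n}\geq(n+1)^{-|\bS|}\cdot\frac{1}{|T_p|}\sum_{t^n\in T_p}\rho_{t^n}.
\end{align*}
Every element of $T_p$ is a permutation of $s^n$, so by permutation invariance of $P_\lambda$ each summand on the right satisfies $\tr\{P_\lambda\rho_{t^n}\}=\tr\{P_\lambda\rho_{s^n}\}$. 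Taking the trace of both sides against $P_\lambda$ yields
\begin{align*}
\tr\{P_\lambda\bar\rho^{\otimes n}\}\geq(n+1)^{-|\bS|}\tr\{P_\lambda\rho_{s^n}\}.
\end{align*}

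Second, I would invoke the standard i.i.d.\ spectral estimation bound \cite{thm2} for $\bar\rho^{\otimes n}$, which in the notation of the excerpt comes out to
\begin{align*}
\tr\{P_\lambda\bar\rho^{\otimes n}\}\leq(2n)^{d^2}\cdot 2^{-nD(\bar\lambda\|\mathrm{spec}(\bar\rho))}
\end{align*}
(readily obtained by combining $\dim F_\lambda\leq 2^{nH(\bar\lambda)}$ with the multiplicity bound $m_{\lambda,n}\leq(2n)^{d^2}$ and Weyl's character formula). Chaining the two displays,
\begin{align*}
\tr\{P_\lambda\rho_{s^n}\}\leq(n+1)^{|\bS|}(2n)^{d^2}\cdot 2^{-nD(\bar\lambda\|\mathrm{spec}(\bar\rho))}\leq(2n)^{|\bS|+d^2}\cdot 2^{-nD(\bar\lambda\|\mathrm{spec}(\bar\rho))},
\end{align*}
which is the claim.

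The only nonroutine step is the symmetrization/type-class reduction: although $\rho_{s^n}$ is not i.i.d., permutation invariance of $P_\lambda$ allows one to replace it by the uniform mixture over its type class, and this mixture is in turn dominated by the genuinely i.i.d.\ state $\bar\rho^{\otimes n}$, losing only a polynomial factor of exponent $|\bS|$ coming from the Csisz\'ar--K\"orner lower bound on $|T_p|$. I do not expect any further technical obstacle; the polynomial prefactor $(2n)^{|\bS|+d^2}$ is exactly the product of this overhead and the prefactor in the i.i.d.\ bound.
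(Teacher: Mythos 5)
Your proposal is correct and follows essentially the same route as the paper: the paper's (implicit) proof, contained in the converse part of Theorem \ref{theorem:main-result-II}, likewise uses permutation invariance of $P_\lambda$ to replace $\rho_{s^n}$ by the uniform average over its type class, bounds $1/|T_N|$ by $(2n)^{|\bS|}\bar N^{\otimes n}(t^n)$ via the Csisz\'ar--K\"orner type-class estimate, extends the sum to all of $\bS^n$ to recover $\tr\{P_\lambda\bar\rho^{\otimes n}\}$, and then applies the i.i.d.\ spectral estimation bound $\tr\{P_\lambda\bar\rho^{\otimes n}\}\leq(2n)^{d^2}2^{-nD(\bar\lambda\|\mathrm{spec}(\bar\rho))}$. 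Your formulation as an operator inequality $\bar\rho^{\otimes n}\geq(n+1)^{-|\bS|}\cdot\frac{1}{|T_p|}\sum_{t^n\in T_p}\rho_{t^n}$ is just a repackaging of the same steps.
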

Our last result is of negative character:
\begin{theorem}\label{theorem:additional-result}
Let $n\in\nn$, $\eps\geq0$ and a  permutation invariant operator $A\in\mathcal B((\mathbb C^d)^{\otimes n})$ which is bounded by $0\leq A_n\leq\eins^{\otimes n}$ be given. Then
\begin{align}
\min_{\rho\in\cs(\mathbb C^d)}\tr\{A_n\rho^{\otimes n}\}\geq1-\eps\qquad\Rightarrow\qquad A_n\geq(1-\eps\cdot(2dn)^{4d^2})\eins^{\otimes n}
\end{align}
and, by applying this result to $(\eins^{\otimes n}-A)$ instead of $A$:
\begin{align}
\max_{\rho\in\cs(\mathbb C^d)}\tr\{A\rho^{\otimes n}\}\leq\eps\qquad\Rightarrow\qquad A_n\leq\eps\cdot(2dn)^{4d^2}\eins^{\otimes n}.
\end{align}
\end{theorem}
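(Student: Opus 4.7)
The plan is to block-diagonalize $A_n$ via Schur--Weyl duality and then Haar-average over $U(d)$ to promote the pointwise hypothesis into an operator-norm bound on each block. Permutation invariance of $A_n$ places it in the commutant of $S_n$ on $(\mathbb C^d)^{\otimes n}$, so Schur--Weyl yields
\begin{equation*}
A_n = \bigoplus_{\lambda \in \mathbb Y_{d,n}} A^\lambda \otimes I_{F_\lambda}, \qquad 0 \leq A^\lambda \leq I_{U_\lambda},
\end{equation*}
with $A^\lambda$ acting on the multiplicity space $U_\lambda$ (of dimension $m_{\lambda,n} \leq (2n)^{d^2}$). By the same duality, $\rho^{\otimes n} = \sum_\lambda p_\lambda(\rho)\, q_\lambda^\rho \otimes I_{F_\lambda}/\dim F_\lambda$ with $p_\lambda(\rho) := \tr(P_\lambda \rho^{\otimes n})$ a probability distribution over $\lambda$ and $q_\lambda^\rho$ a state on $U_\lambda$. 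The hypothesis thus reads $\sum_\lambda p_\lambda(\rho) \tr(A^\lambda q_\lambda^\rho) \geq 1-\eps$, which (since $A^\lambda \leq I_{U_\lambda}$ and $\sum_\lambda p_\lambda(\rho) = 1$) gives the per-block inequality
\begin{equation*}
p_\lambda(\rho)\, \tr\bigl((I_{U_\lambda} - A^\lambda)\, q_\lambda^\rho\bigr) \leq \eps \qquad \text{for every } \lambda \in \mathbb Y_{d,n}.
\end{equation*}

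Next I would Haar-average over $U(d)$. The Schur--Weyl decomposition is $U(d)$-covariant: $P_\lambda$ commutes with $u^{\otimes n}$, so $p_\lambda(u\rho u^*) = p_\lambda(\rho)$ and $q_\lambda^{u\rho u^*} = \pi_\lambda(u)\, q_\lambda^\rho\, \pi_\lambda(u)^*$, where $\pi_\lambda$ is the irreducible representation of $U(d)$ on $U_\lambda$. Applying the per-block inequality to $u\rho u^*$ and integrating against the Haar measure yields, by Schur's lemma, $\int_{U(d)} \pi_\lambda(u)\, q_\lambda^\rho\, \pi_\lambda(u)^*\, du = I_{U_\lambda}/\dim U_\lambda$, hence $\tr(I_{U_\lambda} - A^\lambda) \leq \eps \dim(U_\lambda)/p_\lambda(\rho)$. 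Since $I_{U_\lambda} - A^\lambda \geq 0$, the operator norm is bounded by the trace, giving $\|I_{U_\lambda} - A^\lambda\|_\infty \leq \eps \dim(U_\lambda)/p_\lambda(\rho)$. To maximize $p_\lambda$, I would choose $\rho_\lambda$ diagonal with $\mathrm{spec}(\rho_\lambda) = \bar\lambda$; then $p_\lambda(\rho_\lambda) = \dim(F_\lambda)\, s_\lambda(\bar\lambda) \geq \dim(F_\lambda)\, 2^{-nH(\bar\lambda)}$ (using $K_{\lambda,\lambda}=1$), which by (\ref{eqn4}) is bounded below by an inverse polynomial in $n$.

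Combining these bounds produces $A^\lambda \geq (1 - \eps (2dn)^{4d^2})\, I_{U_\lambda}$ for every $\lambda$, whence $A_n \geq (1 - \eps (2dn)^{4d^2})\, \eins^{\otimes n}$ after reassembling the blocks. The second statement follows by applying the first to $\eins^{\otimes n} - A_n$, which is also permutation invariant and lies in $[0, \eins^{\otimes n}]$. The main obstacle is the Haar-averaging step: converting a bound along a single direction $q_\lambda^\rho$ into a uniform operator-norm bound requires both the $U(d)$-covariance of Schur--Weyl and the irreducibility of $\pi_\lambda$ on $U_\lambda$. The rest is dimensional bookkeeping, using the multiplicity bound $m_{\lambda,n} \leq (2n)^{d^2}$ and the lower bound on $\dim F_\lambda$ quoted in Section \ref{sec:Definitions}, in order to match the polynomial $(2dn)^{4d^2}$ announced in the statement.
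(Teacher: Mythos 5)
Your argument is essentially the paper's own proof in different notation: both rest on the Schur--Weyl block decomposition of a permutation-invariant operator, Haar averaging over $U(d)$ to reduce the hypothesis to the isotypic projections $P_\lambda$, the choice of a state with spectrum $\bar\lambda$ to lower-bound $\tr\{P_\lambda\rho^{\otimes n}\}$ by an inverse polynomial, and a passage from diagonal (trace) control of $I-A^\lambda$ to operator-norm control via positivity (the paper does this through $|c_{ij}|\le\sqrt{c_{ii}c_{jj}}$ on the intertwiner coefficients, you through $\|I-A^\lambda\|_\infty\le\tr(I-A^\lambda)$). The one caveat is quantitative: the bound (\ref{eqn4}) only gives $\dim F_\lambda\cdot 2^{-nH(\bar\lambda)}\ge(2n)^{-2d^6}$, hence a final polynomial of order $(2n)^{d^2+2d^6}$ rather than the stated $(2dn)^{4d^2}$; to recover the sharper constant you need the hook-length estimate the paper uses, $\dim F_\lambda\ge |T_\lambda|\,(n+d+1)^{-d^2}$ together with $\bar\lambda^{\otimes n}(T_\lambda)\ge(n+1)^{-d}$, in place of (\ref{eqn4}).
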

\begin{remark}
A look at the proof of this theorem reveals that the decomposition $V_\lambda=\oplus_f V_{f,\lambda}$ is still not sufficiently fine for proving the theorem, indicating that there is some more relevant information to be gained by using POVMs that employ projections onto irreducible subspaces rather than the (generally reducible) ones that we employ. An even stronger method of types is thus still a possibility.\\
An operational implication of Theorem \ref{theorem:additional-result} result is that any such sequence $((A_n,\eins^{\otimes n}-A_n))_{n\in\nn}$ of POVMs is useless for distinguishing between separable states and entangled states on $(\mathbb C^d)^{\otimes n}$, for large values of $n$: In the light of our Theorem \ref{theorem:main-result-II}, and since $\cs(\mathbb C^d)=\conv(\cs(\mathbb C^d))$, we see that any measurement scheme that successfully distinguishes between the hypotheses ``some fixed entangled state'' versus ``any separable state'' has probability of successfully detecting the second hypothesis converging to one only polynomially fast, or it cannot be permutation invariant.
\end{remark}
We arrive at our two main results by exploiting the simple observation that every frequency typical subspace $V_f$ carries a representation of the symmetric group $S_n$ and can therefore be decomposed into irreducible subspaces of $S_n$.
\end{section}
\begin{section}{\label{sec:Proofs}Proofs}
\begin{proof}[Proof of Theorem \ref{theorem:main result}]
This proof is split into several subsections. We start with the error exponent.\\
\emph{Proof that we have the correct error exponent in Theorem \ref{theorem:main result}.} For a projection $P_{f,\lambda}$ we have - including the case that $t_{d-k},\ldots,t_d=0$ for some $k>0$ through the convention $2^{-\infty}=0$ and the case that $\dim(F_\lambda)>0$ but $\tr\{P_{f,\lambda}\}=0$ that can occur if $V_f$ does not carry any of the representations corresponding to $\lambda$ -
\begin{align}\label{eqn:start-of-error-exponent}
\tr\{P_{f,\lambda}\sigma^{\otimes n}\}&\leq m_{\lambda,n}\dim(F_{\lambda})2^{n\sum_{i=1}^d\frac{1}{n}f_i\log(t_i)}\\
&\leq m_{\lambda,n}2^{n(H(\overline \lambda)+\sum_{i=1}^d\frac{1}{n}f_i\log t_i)}.
\end{align}
If, for some $s\in\bS$, both $\|\bar\lambda-r_s\|\leq\eps$ and $\|f-\tilde r_s\|\leq\eps$, then equation (\ref{eqn:start-of-error-exponent}) leads to
\begin{align}
\tr\{P_{f,\lambda}\sigma^{\otimes n}\}&\leq m_{\lambda,n}2^{n(H(r_s)+\eps|\log(\eps/d)|+\sum_{i=1}^d\tilde r_s(i)\log t_i+d\eps\cdot\max_i|\log t_i|)}.
\end{align}
But for an orthonormal eigenbasis $\{e_i^{s}\}_{i=1}^d$ of $\rho_s$ we have $\tilde r_s(i)=\sum_j|\langle \tilde e_i,e_j^s\rangle|^2r_s(j)$ and so
\begin{align}
H(r)+\sum_{i=1}^d\tilde r_s(i)\log t_i=-D(\rho_s\|\sigma),
\end{align}
and for the projection $P_n$ this means that
\begin{align}
\tr\{P_n\sigma^{\otimes n}\}&=\sum_{(f,\lambda)\in\Lambda_\eps}\tr\{P_{f,\lambda}\sigma^{\otimes n}\}\\
&\leq(2n)^{d^2}\max_{(f,\lambda)\in\Lambda_\eps}\tr\{P_{f,\lambda}\sigma^{\otimes n}\}\\
&\leq(2n)^{d^2}\sup_{s\in\bS}2^{-n(D(\rho_s\|\sigma)-\eps|\log(\eps/d)|-d\eps\cdot\max_i|\log t_i|)}\\
&=2^{-n(\inf_{s\in\bS}D(\rho_s\|\sigma)-\frac{d^2}{n}\log(2n)-\eps|\log(\eps/d)|-d\eps\cdot\max_i|\log t_i|)}\\
&=2^{-n(\inf_{s\in\bS}D(\rho_s\|\sigma)-\Theta(n,\eps,d,\sigma))}\label{eqn:end-of-error-exponent},
\end{align}
which is what we wanted to show.
\\\\
\emph{Proof that we can identify every state in our given set with high probability.} For every $s\in\bS$, define the projections $P_{n,s}:=\sum_{f:\|\bar f-\tilde r_s\|\leq\eps}\sum_{\lambda:\|\bar\lambda-r_s\|\leq\eps} P_{f,\lambda}$. These satisfy the lower bound
\begin{align}
\forall\ s\in\bS,\qquad P_n\geq P_{n,s}.
\end{align}
Another inequality we will need is an upper bound that is valid for all $P_{f,\lambda}$ satisfying $\tr\{P_{f,\lambda}P_{n,s}\}=0$. It generally holds that $P_{f,\lambda}\leq \sum_\lambda P_{f,\lambda}=:P_f$ and $P_{f,\lambda}\leq\sum_f P_{f,\lambda}=:P_\lambda$ and thus by \cite{thm2} and the basic estimate Lemma 2.6 in \cite{csiszar-koerner} followd by Pinsker's inequality we get the upper bound
\begin{align}
\tr\{\rho_s^{\otimes n}P_{f,\lambda}\}&\leq\min\{\tr\{\rho_s^{\otimes n}P_f\},\tr\{\rho_s^{\otimes n}P_\lambda\}\}\\
&\leq(2n)^{d^2}\min\{2^{-nD(\bar\lambda\|r_s)},2^{-nD(\bar f\|\tilde r_s)}\}\\
&\leq (2n)^{d^2}2^{-n\alpha\eps^2}.
\end{align}
Set $X_s:=\{(f,\lambda):\|\bar f-\tilde r_s\|>\eps\ \bigwedge\ \|\bar\lambda-r_s\|>\eps\}$. From definition of our basic object it can be seen that the inequality $\sum_f\sum_\lambda P_{f,\lambda}=\eins^{\otimes n}$ holds, so for an arbitrary $s\in\bS$ we have
\begin{align}
1&=\tr\{\rho_s^{\otimes n}\sum_f\sum_\lambda P_{f,\lambda}\}\\
&=\tr\{\rho_s^{\otimes n}P_{n,s}\}+\tr\{\rho_s^{\otimes n}(\eins_{\mathbb C^d}^{\otimes n}-P_{n,s})\}\\
&\leq\tr\{\rho_s^{\otimes n}P_{n,s}\}+(2n)^{d^2}\max_{(f,\lambda)\in X_s}\tr\{\rho_s^{\otimes n}P_{f,\lambda}\}\\
&\leq\tr\{\rho_s^{\otimes n}P_{n,s}\}+(2n)^{2d^2}2^{-n\alpha\eps^2}\label{eqn:identify-even-more}\\
&\leq\tr\{\rho_s^{\otimes n}P_{n}\}+(2n)^{2d^2}2^{-n\alpha\eps^2}.
\end{align}
It follows
\begin{align}\label{eqn-for-kostka-existence}
\inf_{s\in\bS}\tr\{\rho_s^{\otimes n}P_n\}\geq1-2^{-n(\alpha\eps^2-\frac{2d^2}{n}\log(2n))}.
\end{align}
Therefore,
\begin{align}
\lim_{n\to\infty}\inf_{s\in\bS}\tr\{P_n\rho_s^{\otimes n}\}=1,
\end{align}
as required. By linearity of the trace, the results immediately imply that we can also (asymptotically) distinguish the sequence $(\conv(\{\rho^{\otimes n}_s\}_{s\in\bS})_{n\in\nn}$ of sets from the sequence $(\sigma^{\otimes n})_{n\in\nn}$.
\\\\
\emph{Proof of the lower bound on the error exponent in Theorem \ref{theorem:main result}.} Let an operator $0\leq A\leq\eins^{\otimes n}$ be given that satisfies
\begin{align}
\inf_{s\in\bS}\tr\{A\rho_s^{\otimes n}\}\geq1-\nu
\end{align}
for some $\nu\in(0,1)$. Note that we can without loss of generality assume that $A$ is permutation-invariant, since permuting an $A$ that is not invariant does change neither $\tr\{A\rho^{\otimes n}\}$ nor $\tr\{A\sigma^{\otimes n}\}$. But then by Schur-Weyl duality we can write
\begin{align}
A=\sum_\lambda A_\lambda,
\end{align}
where $0\leq A_\lambda\leq c_\lambda P_\lambda$ for some set of real numbers $c_\lambda$. This leads us to
\begin{align}
\tr\{\sigma^{\otimes n}A_\lambda\}&=\sum_{\lambda,f,j}2^{n\sum_i\bar f(i)\log t_i}\tr\{P_{f,\lambda,j}A_\lambda P_{f,\lambda,j}\},
\end{align}
where $P_{f,\lambda,j}$ are any set of mutually \emph{orthogonal} and \emph{irreducible} projections such that $P_{f,\lambda}=\sum_jP_{f,\lambda,j}$ for all $(f,\lambda)$. On the other hand by Lemma 5 in \cite{hayashi01} (which is identical to Lemma 9 in \cite{hayashi02}) we get
\begin{align}
\sum_{f,\lambda,j}P_{f,\lambda,j}A P_{f,\lambda,j}\geq(2n)^{-d^2}A,
\end{align}
and it follows
\begin{align}
1-\nu\leq(2n)^{d^2}\sum_{f,\lambda,j}\tr\{P_{f,\lambda,j}A P_{f,\lambda,j}\rho^{\otimes n}\}.
\end{align}
It is clear that $P_{f,\lambda,j}A P_{f,\lambda,j}=c_{f,\lambda,j}P_{f,\lambda,j}$ for every triple $(f,\lambda,j)$ and a suitable choice of coefficients $0\leq c_{f,\lambda,j}\leq1$ by permutation-invariance of $A$. Therefore there is at least one $(f,\lambda,j)$ such that $\|\bar f-\tilde r\|\leq n^{-1/4}$, $\|\bar\lambda-r\|\leq n^{-1/4}$ and
\begin{align}
c_{f,\lambda,j}\geq(2n)^{-2d^2}(1-\nu-2^{-\alpha\sqrt{n}}).
\end{align}
But this implies that for this triple $(f,\lambda,j)$ we have (and at this point one sees again that it is crucial that we defined the $V_f$ using the eigenbasis of $\sigma$, since this ensures that $\sigma^{\otimes n}$ and every one of the $P_{f,\lambda,j}$ commute)
\begin{align}
\tr\{\sigma^{\otimes n}A\}&\geq\tr\{\sigma^{\otimes n}A_\lambda\}\\
&=\sum_{f',i}\tr\{\sigma^{\otimes n}P_{f',i}A_\lambda P_{f',i}\}\\
&\geq2^{n\sum_i\bar f(i)\log t_i}\tr\{P_{f,\lambda,j}A_\lambda P_{f,\lambda,j}\}\\
&=2^{n\sum_i\bar f(i)\log t_i}\tr\{P_{f,\lambda,j}\}c_{f,\lambda,j}\\
&\geq2^{n\sum_i\bar f(i)\log t_i}\dim(F_\lambda)\frac{1-\nu-2^{-\alpha\sqrt{n}}}{(2n)^{2d^2}}\\
&\geq2^{n(\sum_i\bar f(i)\log(t_i)-H(\bar\lambda)-\frac{2d^6}{n}\log(2n))}\frac{1-\nu-2^{-\alpha\sqrt{n}}}{(2n)^{2d^2}}\\
&\geq2^{-n(D(\rho\|\sigma)-\Theta(n,n^{-\frac{1}{4}},d)-\frac{2d^6}{n}\log(2n)+\frac{1}{n}\log(\frac{1-\nu-2^{-\alpha\sqrt{n}}}{(2n)^{2d^2}}))},
\end{align}
So the function
\begin{align}
\Theta'(n,\nu,d):=\Theta(n,n^{-\frac{1}{4}},d)-\frac{2d^6}{n}\log(2n)+\frac{1}{n}\log(\frac{1-\nu-2^{-\alpha\sqrt{n}}}{(2n)^{2d^2}})
\end{align}
does the job.
\end{proof}
\begin{proof}[Proof of Theorem \ref{theorem:kostka-numbers}]
Let us first state some known facts about Kostka numbers. It holds $K_{f,\lambda}>0\ \Leftrightarrow\ f\prec\lambda$. But $K_{f,\lambda}$ is just the number of ways to fill the tableau $\lambda$ with numbers from $[d]$ such that the numbers in the filling are weakly increasing in each row, read from left to right, and strictly increasing along the columns (read from top to bottom). Thus $K_{f,\lambda}>0\ \Leftrightarrow\ \tr\{P_{f,\lambda}\}>0$. There is actually another important equivalence that we shall use: $f\prec\lambda\ \Leftrightarrow\ f=D\lambda$ for some doubly stochastic matrix $D$.\\
Now assume $K_{f,\lambda}>0$ holds. Thus $f=D\lambda$ for some doubly stochastic matrix $D$. In this case we may define a state $\rho$ by the following procedure: Pick any basis $\{e_i\}_{i=1}^d$. Choose a unitary matrix $U=\sum_{i,j}u_{ij}|e_i\rangle\langle e_j|$ such that for the matrix coefficients of $D$ it holds $D_{ij}=|u_{ij}|^2$ for all $i,j\in[d]$. This can be done, thanks to Horn's Lemma \cite{horn-lemma}. Define $\rho:=\sum_i\bar\lambda_i|e_i\rangle\langle e_i|$, then $\langle e_i,U\rho U^\dag e_i\rangle=f(i)$ for all $i\in[d]$, and this means that the pinching of $\rho$ to the basis $\{U^\dag e_i\}_{i=1}^d$ satisfies our requirements.\\
Now take any state $\rho$ with spectrum $r_\rho$ and pinching $\tilde r_\rho$ in some (not exactly specified) basis. If $\rho=\frac{1}{d}\eins$ there is nothing to prove, since every pinching then satisfies $\tilde r_\rho=r_\rho$ and obviously choosing $D=Id$ is consistent with elementary fact that $K_{nr_\rho,nr_\rho}>0$ for every $n$ such that $n/d\in\nn$.\\
Thus, $\rho\neq\frac{1}{d}\eins$. Choosing $\sigma=\frac{1}{d}\eins$ it follows that $D(\rho\|\sigma)>0$ and using our projections $P_n$ we see from equation (\ref{eqn-for-kostka-existence}) that there is $N\in\nn$ such that for all $n\geq N$ there is a projection $P_{f,\lambda}\neq0$ that has the property $\|\bar f-\tilde r_\rho\|\leq n^{-\frac{1}{4}}$ and $\|\bar\lambda-r_\rho\|\leq n^{-\frac{1}{4}}$, proving our claim.
\end{proof}
\begin{proof}[Proof of Theorem \ref{theorem:main-result-II}] We will first show the statements in the case that $\mathfrak S'$ is finite, and then apply them to the case of an infinite set $\mathfrak S'$. The issues arising in the latter setting are mainly due to 'roundness' of the set of states. It could possibly happen that $\mathfrak S'$ contains parts of the boundary of the set of $\cs(\mathbb C^d)$, and it can then (thinking of the Bloch sphere representation is helpful here) be impossible to include $\mathfrak S'$ into the convex hull of some finite set that lies entirely within the set of states itself.\\
\emph{Robustification, Part I - Finite $\bS$.} Let a state $\rho_{\hat s^n}$ be given, and let $\hat s^n$ have type $N:\bS\to\nn$ (meaning that the short-hand $N(\cdot)=N(\cdot|\hat s^n)$ is valid). Let $p\in\mathfrak P(\bS)$ be defined by $p(s):=\frac{1}{n}N(s)$. It then holds (see \cite{csiszar-koerner}, the proof of Lemma 2.3) that $p^{\otimes n}(T_N)\geq (2n)^{-|\bS|}$. Since our measurement is permutation-invariant, it follows
\begin{align}
\tr\{(\eins-P_n)(\sum_{s\in\bS}p(s)\rho_s)^{\otimes n}\}&=\tr\{(\eins-P_n)\sum_{s^n\in\bS^n}p^{\otimes n}(s^n)\rho_{s^n}\}\\
&\geq\tr\{(\eins-P_n)\sum_{s^n\in T_N}p^{\otimes n}(s^n)\rho_{s^n}\}\\
&=\tr\{(\eins-P_n)\sum_{s^n\in T_N}p^{\otimes n}(s^n)\rho_{\hat s^n}\}\\
&=p^{\otimes n}(T_N)\tr\{(\eins-P_n)\rho_{\hat s^n}\}\\
&\geq (2n)^{-|\bS|}\tr\{(\eins-P_n)\rho_{\hat s^n}\},
\end{align}
implying, via equation (\ref{eqn-for-kostka-existence}) and our choice of $P_n$, that
\begin{align}
\tr\{P_n\rho_{\hat s^n}\}&=1-\tr\{(\eins-P_n)\rho_{\hat s^n}\}\\
&\geq1-(2n)^{|\bS|}\tr\{(\eins-P_n)(\sum_{s\in\bS}p(s)\rho_s)^{\otimes n}\}\\
&\geq1-(2n)^{|\bS|}2^{-n(\alpha\eps^2+\frac{2d^2}{n}\log(2n))}\\
&\geq1-2^{-n(\alpha\eps^2+\frac{2d^2+|\bS|}{n}\log(2n))}.
\end{align}
By linearity of the trace, this immediately implies that we can distinguish the sequence $(\conv(\{\rho_{s^n}\}_{s^n\in\bS^n}\}))_{n\in\nn}$ of sets from the sequence $(\sigma^{\otimes n})_{n\in\nn}$.\\
We still have to show that we have the right error exponent, but this follows immediately from the Sanov case by noting that for arbitrary $s^n\in\bS^n$ and the type $N(\cdot):=N(\cdot|s^n)$ we have by Lemma 2.3 in \cite{csiszar-koerner}
\begin{align}
\tr\{P_n\rho_{s^n}\}\leq(2n)^{|\bS|}\tr\{P_n(\sum_{s\in\bS}\bar N(s)\rho_s)^{\otimes n}\}.
\end{align}
\emph{Proof of the converse part for finite $|\bS|$.} Let $\tr\{A\rho_{s^n}\}\geq1-\nu$ for all $s^n\in\bS^n$. Like before, we can without loss of generality assume that $A$ is permutation-invariant. Then again by Lemma 5 in \cite{hayashi02} we get
\begin{align}
A&\leq(2n)^{d^2}\sum_{f,\lambda,j}c_{f,\lambda,j}P_{f,\lambda,j},
\end{align}
where $P_f:=\sum_\lambda P_{f,\lambda}$, $P_\lambda:=\sum_fP_{f,\lambda}$ and $P_fP_\lambda=P_\lambda P_f$ holds and the index $j$ refers to \emph{any} decomposition $V_{f,\lambda}=\oplus_{\lambda,j}V_{f,\lambda,j}$ into \emph{irreducible} components. Obviously, $P_{f,\lambda,j}$ then denote the orthogonal projections onto the $V_{f,\lambda,j}$.\\
Let us pick $p\in\mathfrak P(\bS)$ and an $s^n\in\bS^n$ having type abbreviated by $N(\cdot)=N(\cdot|s^n)$. We may additionally assume that $N$ satisfies $\|\bar N-p\|\leq \delta_n$ (where $\delta_n=2|\bS|/n$). Then assume that a $\lambda\in\mathbb Y_{d,n}$ satisfies $\|\bar\lambda-\mathrm{spec}(\bar\rho)\|>\delta>0$ for some $\delta$ and the state $\bar\rho:=\sum_sp(s)\rho_s$. For such $\lambda$ it follows from \cite{csiszar-koerner}, Lemma 2.3, that
\begin{align}
\tr\{P_\lambda\rho_{s^n}\}&=\frac{1}{|T_N|}\sum_{s^n\in T_N}\tr\{P_\lambda \rho_{s^n}\}\\
&\leq(2n)^{|\bS|}\bar N^{\otimes n}(T_N)\sum_{s^n\in T_N}\tr\{P_\lambda\rho_{s^n}\}\\
&\leq(2n)^{|\bS|}\sum_{s^n\in\bS^n}\bar N^{\otimes n}(s^n)\tr\{P_\lambda\rho_{s^n}\}\\
&=(2n)^{\bS|}\tr\{P_\lambda(\sum_s\bar N(s)\rho_s)^{\otimes n}\}\\
&\leq(2n)^{|\bS|}(2n)^{d^2}2^{-nD(\bar\lambda\|\mathrm{spec}(\sum_s\bar N(s)\rho_s))}\\
&\leq(2n)^{|\bS|+d^2}2^{-n\alpha\|\bar\lambda-\mathrm{spec}(\sum_s\bar N(s)\rho_s)\|^2}\\
&\leq(2n)^{|\bS|+d^2}2^{-n\alpha(\|\bar\lambda-\mathrm{spec}(\bar\rho)\|-\|\mathrm{spec}(\bar\rho)-\mathrm{spec}(\sum_s\bar N(s)\rho_s)\|)^2}\\
&\leq(2n)^{|\bS|+d^2}2^{-n\alpha(\|\bar\lambda-\mathrm{spec}(\bar\rho)\|-\delta_n)^2}\\
&\leq(2n)^{|\bS|+d^2}2^{-n\alpha(\|\bar\lambda-\mathrm{spec}(\bar\rho)\|^2-2\delta_n)}\\
&\leq(2n)^{|\bS|+d^2}2^{-n\alpha(\delta^2-2\delta_n)},
\end{align}
where the third inequality counted from below is due to inequality (1) in \cite{bhatia-spectral-variation}. In the same way, we get that for all frequencies $f$ satisfying $\|\bar f-\tilde r_{\bar\rho}\|>\delta$ we have the following inequality:
\begin{align}
\tr\{P_f\rho_{s^n}\}&\leq(2n)^{|\bS|+d^2}2^{-n\alpha(\delta^2-2\delta_n)}
\end{align}
We use these results to get (using the abbreviation $\bar\rho:=\sum_sp(s)\rho_s$) the following:
\begin{align}
1-\nu&\leq(2n)^{d^2}\sum_{f,\lambda,j}c_{f,\lambda,j}\tr\{P_{f,\lambda,j}\rho_{s^n}\}\\
&\leq(2n)^{d^2}\sum_{f,\lambda,j}c_{f,\lambda,j}\tr\{P_{f,\lambda,j}\rho_{s^n}\}\\
&\leq(2n)^{d^2}\sum_{\|\bar f-\tilde r_{\bar\rho}\|\leq\delta}\sum_{\|\bar\lambda-r_{\bar\rho}\|\leq\delta}c_{f,\lambda,j}\tr\{P_{f,\lambda,j}\rho_{s^n}\}+(2n)^{|\bS|+2d^2}2^{-n\alpha(\delta^2-2\delta_n)}\\
&\leq (2n)^{2d^2}\max\{c_{f,\lambda,j}:\|\bar\lambda-r_{\bar\rho}\|\leq\delta,\ \|\bar f-\tilde r_{\bar\rho}\|\leq\delta\}+2^{-n[\alpha(\delta^2-2\delta_n)-\frac{|\bS|+2d^2}{n}\log(2n)]},
\end{align}
and choosing $\delta=n^{-1/8}$ we get (as before) that there exists a triple $(f,\lambda,j)$ such that $\|\bar\lambda-r_{\bar\rho}\|\leq n^{-1/8}$ and $\|\bar f-\tilde r_{\bar\rho}\|\leq n^{-1/8}$ and at the same time
\begin{align}
(1-\nu-2^{-n[\alpha(n^{-1/4}-4|\bS|n^{-1})-\frac{|\bS|-2d^2}{n}\log(2n)]})(2n)^{-2d^2}\leq c_{f,\lambda,j}.
\end{align}
Let us abbreviate this by writing instead that, for the quantity
\begin{align}
\Gamma'(n,d,\nu,|\bS|):=\frac{1}{n}\log[(1-\nu-2^{-n[\alpha(n^{-1/4}-4|\bS|n^{-1})-\frac{|\bS|-2d^2}{n}\log(2n)]})(2n)^{-2d^2}],
\end{align}
we have
\begin{align}
c_{f,\lambda,j}\geq2^{n\Gamma'(n,d,\nu,|\bS|)}.
\end{align}
It is also important to note that $\lim_{n\to\infty}\Gamma'(n,d,\nu,|\bS|)=0$, if the other values remain fixed. This leads to
\begin{align}
\tr\{A\sigma^{\otimes n}\}&\geq c_{f,\lambda,j}\tr\{P_{f,\lambda,j}\sigma^{\otimes n}\}\\
&\geq2^{n\Gamma'(n,d,|\bS|)}2^{n(H(\bar\lambda)-\frac{2d^6}{n}\log(2n))}2^{n\sum_i\bar f(i)\log t_i}\\
&\geq2^{-n(D(\bar\rho\|\sigma)+\Theta(n,n^{-\frac{1}{8}},d,\sigma)-\Gamma'(n,d,\nu,|\bS|)+\frac{2d^6}{n}\log(2n))}\\
&=2^{-nD(\bar\rho\|\sigma)+\Gamma(n,\nu,d,\sigma,|\bS|))},
\end{align}
with the obvious definition of the function $\Gamma$. Note that $\lim_{n\to\infty}\Gamma(n,\nu,d,\sigma,|\bS|)=0$ for all $\nu\in(0,1)$. But above estimate holds for all $\bar\rho\in\conv(\mathfrak S')$, so
\begin{align}
\tr\{A\sigma^{\otimes n}\}&\geq2^{-n(\inf_{\rho\in\conv(\mathfrak S')}D(\rho\|\sigma)-\Gamma(n,\nu,d,\sigma,|\bS|))}.
\end{align}
\ \\ \emph{Robustification, Part II - infinite $\bS$.} We will approximate the set $\bS$ by finite sets first, to which we then apply the previous result. We proceed as follows. First, we drag the set $\mathfrak S'$ a bit into the direction of the center of $\cs(\mathbb C^d)$ by applying an appropriate cptp map. The result will be that the so modified set $\mathfrak S''$ has positive set distance from the boundary of $\cs(\mathbb C^d)$. We can then find a finite set $\mathfrak S'''$ of states such that $\mathfrak S''\subset\conv(\mathfrak S''')$ holds. The adjoint of our cptp map applied to our invariant projections then defines our measurement. Unfortunately this means that our measurement operators in this scenario are no longer extremal.\\
Let us get started.\\
Without diving deeper into issues concerning distances between sets at this point it is clear that (for the \emph{convex} set $\mathfrak S:=\conv(\mathfrak S')$) the statement $\sigma\notin\mathfrak S$ implies that there is a small $\delta$ satisfying $1>\delta>0$ such that the channel
\begin{align}
\cn_\delta(\cdot):=(1-\delta)Id(\cdot)+\delta\frac{1}{d}\eins
\end{align}
has the property that $\sigma\notin\cn_\delta(\mathfrak S)$. Since the case $\sigma\in\mathfrak S$ is trivial, we may assume that $\sigma\notin\mathfrak S$ without loss of generality. Define $\mathfrak S'':=\cn_\delta(\mathfrak S)$. Now take any finite set $\mathfrak S'''=\{\hat\rho_s\}_{s\in\bS'''}$ such that $\mathfrak S''\subset\conv(\mathfrak S''')$. Then, applying the result from the previous proof to our new set $\conv(\mathfrak S''')$ we see that
\begin{align}\label{eqn:lower-bound-on-success-probability}
\tr\{P_n\rho_{s^n}'\}&\geq1-2^{-n(\alpha\eps^2+\frac{2d^2+|\bS'''|}{n}\log(2n))}
\end{align}
holds for all $\rho_{s^n}'$ for which $\rho_{s_i}'\in\conv(\mathfrak S''')$ for all $i=1,\ldots,n$. But for the transpose channel $\cn_\delta^\dag$ of $\cn_\delta$ this implies that for all $s^n\in\bS^n$ we have
\begin{align}
\tr\{\cn_\delta^{\dag\otimes n}(P_n)\rho_{s^n}\}&\geq1-2^{-n(\alpha\eps^2+\frac{d^2+|\bS'''|}{n}\log(2n))}.
\end{align}
Since $0\leq\cn_\delta^{\dag\otimes n}(P_n)\leq\eins^{\otimes n}$ holds we have succeeded in constructing a measurement that correctly identifies the AVQS $\mathfrak S'$. The next step naturally is to make $\delta$ very small, and hope that we then get closer and closer to the optimal exponent. This will of course come at a price: the lower bound (\ref{eqn:lower-bound-on-success-probability}) is dependent on the size $\bS'''$ of the number of extremal points in our finite approximation of $\mathfrak S'$ - and this number increases with $\delta$. We shall now quantify this.\\
Consider the supporting hyperplane of $\mathfrak S'\cup\{\frac{1}{d}\eins\}$ as a real vector space $W$ that contains all the sets $\mathfrak S,\mathfrak S',\mathfrak S''$. All operators in $a\in W$ obey $\tr\{a\}=1$. Obviously, this vector space does contain matrices that are not in $\cs(\mathbb C^d)$ - they have negative eigenvalues. We want to construct a finite approximation $\mathfrak S'''$ to $\mathfrak S''$ such that $\mathfrak S''\subset\conv(\mathfrak S''')\subset\cs(\mathbb C^d)$.\\
On $W$, we may still use the one-norm $\|\cdot\|$ as a norm. The intersection $B:=\rebd\cs(\mathbb C^d)\cap W$ satisfies $\mathfrak S''\subset\conv(B)$, and it is thus our goal to construct an approximation satisfying $\mathfrak S''\subset\conv(\mathfrak S''')\subset\conv(B)$. On $W$, a $\delta$-net in 1-norm of cardinality no more than $N(\delta)\leq(2/\delta)^{2d^2}$ can be found via the volume bound from \cite{milman-schechtman} (Lemma 2.6) in its application to quantum states (see e.g. \cite{bbn1}, Lemma 5.1 and note that quantum states are just special instances of quantum channels). We additionally need to show that the blow-up $(\mathfrak S'')_\eps$ is contained in $\conv(B)$ for small enough $\eps>0$. This is done by taking any normalized vector $x\in\mathbb C^d$ and computing, for some $a\in W$ satisfying $\|a-\cn_\delta(\rho_s)\|\leq\eps$,
\begin{align}
\langle x,a x\rangle&=\langle x,(a-\cn_\delta(\rho_s)) x\rangle +\langle x,\cn_\delta(\rho_s)x\rangle\\
&\geq\langle x,(a-\cn_\delta(\rho_s)) x\rangle+\delta/2\\
&\geq\delta/2-\eps,
\end{align}
and for $\eps\leq\delta/2$ this is larger than or equal to zero. So, applying $\cn_\delta$ guarantees us that $(\cn_\delta(\mathfrak S'))_{\delta/2}\subset\conv(B)\subset\cs(\mathbb C^d)$.\\
It can be read off from \cite{webster}, Theorems 3.1.6 and 1.8.5 and above cardinality bounds for $\delta$-nets that $|\bS'''|\leq(12/\delta)^{2d^2}$ can be chosen. Now replace $\delta$ by $\delta_n=12n^{-1/(4d^2)}$, then $|\bS'''|\leq\sqrt{n}$. Thus,
\begin{align}
\tr\{\cn_{\delta_n}^{\dag\otimes n}(P_n)\rho_{s^n}\}&\geq1-2^{-n(\alpha\eps^2+\frac{d^2+\sqrt{n}}{n}\log(2n))}\qquad\forall\rho_{s^n}\in\mathfrak S^{\otimes n}.
\end{align}
At the same time, looking back to our equations (\ref{eqn:start-of-error-exponent}) to (\ref{eqn:end-of-error-exponent}), we see that the following is valid:
\begin{align}
\tr\{\cn_\delta^{\dag\otimes n}(P_n)\sigma^{\otimes n}\}&=\tr\{P_n\cn_\delta^{\otimes n}(\sigma^{\otimes n})\}\\
&\leq2^{-n(\inf_{\rho\in\mathfrak S}D(\rho\|\cn_\delta(\sigma))-\Theta(n,\eps,d,\cn_{delta_n}(\sigma))}.
\end{align}
It still holds $\lim_{n\to\infty}\Theta(n,\eps,d,\cn_{delta_n}(\sigma))=0$, but by monotonicity of the relative entropy it is not yet clear that this is in general the optimal exponent.
By assumption, $\inf_{\rho\in\mathfrak S}D(\rho\|\sigma)<\infty$. Then also $\inf_{\rho\in\mathfrak S}D(\rho\|\cn_{\delta_n}(\sigma))<\infty$ so that for every $\rho\in\mathfrak S$ and for $\delta_n< t_d$, we also have
\begin{align}
D(\rho\|\cn_{\delta_n}(\sigma))&=D(\rho\|\sigma)-\tr\{\rho(\log(\cn_{\delta_n}(\sigma))-\log(\sigma))\}\\
&\geq D(\rho\|\sigma)-d\cdot\max_i\log(\frac{t_i}{(1-{\delta_n})t_i+{\delta_n}/2})\\
&\geq D(\rho\|\sigma)-d\cdot\log(\frac{t_1}{(1-{\delta_n})t_1+{\delta_n}/2})\\
&\geq D(\rho\|\sigma)-d\cdot\log(\frac{t_1}{(1-{\delta_n})t_1})\\
&=D(\rho\|\sigma)+d\cdot\log(1-{\delta_n}).
\end{align}
Clearly, this proves
\begin{align}
\tr\{\cn_{\delta_n}^{\dag\otimes n}(P_n)\sigma^{\otimes n}\}\leq2^{-n(\inf_{\rho\in\conv(\mathfrak S)}D(\rho\|\sigma)+d\cdot\log(1-\delta_n)-\Theta(n,d,\eps,\cn_{\delta_n}(\sigma))}.
\end{align}
\emph{Proof of the lower bound on the error exponent in Theorem \ref{theorem:main-result-II} for infinite $|\bS|$.}\\
Luckily, in this converse part of the proof we do not have to approximate $\mathfrak S$ from the outside, an approximation from the inside is enough. We will need the Fannes-Audenaert inequality that yields the following bound:
\begin{align}
|D(\rho\|\sigma)-D(\rho'\|\sigma)|\leq \tau\log(d)+h(\tau)+\tau\log(1/t_d)=:c(\tau),
\end{align}
where $\tau:=\frac{1}{2}\|\rho-\rho'\|$. Then, as before, we approximate an infinite set $\conv(\mathfrak S)$ by a sequence $\conv(\mathfrak S_n)_{n\in\nn}$ of convex sets, each $\conv(\mathfrak S_n)$ having finitely many extremal points $\{\rho_x\}^{(n)}\}_{x\in\bX_n}$, where $|\bX_n|\leq\sqrt{n}$ and for each $\rho\in\conv(\mathfrak S)$ and $n\in\nn$ there is $\rho'\in\conv(\mathfrak S_n)$ such that $\|\rho-\rho'\|\leq12 n^{-1/4d^2}$. Then, by playing everything back to the finite case and using the scaling of $|\bX_n|$ we get
\begin{align}
\tr\{A\sigma^{\otimes n}\}&\geq2^{-n(\inf_{\rho\in\conv(\mathfrak S)}D(\rho\|\sigma)+c(12 n^{-1/4d^2})+\Gamma(n,\eps,d,\sigma,\sqrt{n}))}.
\end{align}
This proves our claim, since $\lim_{n\to\infty}\Gamma'(n,\eps,d,\sqrt{n})=0$ for all $\eps\in(0,1)$.
\end{proof}
\begin{proof}[Proof of our no-go-result]
Let a permutation-invariant operator $A\in\mathcal B((\mathbb C^d)^{\otimes n})$ satisfy the inequalities $0\leq A\leq\eins$ and
\begin{align}
1-\eps\leq\tr\{A\rho^{\otimes n}\}\qquad\forall\ \rho\in\cs(\mathbb C^d)
\end{align}
for some $\eps>0$. Then due to Schur-Weyl duality, $\eins-A$ can be written $\eins-A=\sum_{\lambda}A_{\lambda}$ with each $A_\lambda$ satisfying $P_\lambda c_\lambda\geq A_\lambda\geq0$ for some $1\geq c_\lambda\geq0$, where $P_\lambda$ is the projection onto the isotypical subspace corresponding to $\lambda$. It follows that if we average an $A_\lambda$ over the unitary group, we get for every $\rho$ that
\begin{align}
\tr\{\int d_UU^{\otimes n}A_\lambda U^{\dag\otimes n}\rho^{\otimes n}\}\leq\int d_U\tr\{U^{\otimes n}(\eins-A)U^{\dag\otimes n}\rho^{\otimes n}\}\leq \int d_U\eps=\eps.
\end{align}
Let $\lambda$ be fixed for the moment. Let the invariant subspace $V_\lambda$ have a decomposition into irreducible components of $S_n$ as $V_\lambda=\bigoplus_iV_i$. The operator $A_\lambda$ can then, due to Schur-Weyl duality, be written as
\begin{align}
A_\lambda=\sum_{i,j=1}^{m_{n,\lambda}}c_{ij}Y_{ij},
\end{align}
for operators $Y_{ij}$ such that for each pair $i\neq j$, the operator $Y_{ij}$ commutes with the action of the permutation group and maps the irreducible representation $V_j$ to $V_i$ and each $P_i:=Y_{ii}$ is the orthogonal projection onto $V_i$. Moreover, $\tr\{Y_{ij}\}=\delta(i,j)\dim(F_\lambda)$. The numbers $c_{ij}$ satisfy $c_{ii}\geq0$ for all $i\in[m_{n,\lambda}]$ and $|c_{ij}|\leq \sqrt{|c_{ii}|\cdot|c_{jj}|}$, since $A_\lambda\geq0$.\\
Now choose a state $\rho$ with spectrum $t$ satisfying $t=\lambda$, and observe that $\int d_U(U\rho U^\dag)^{\otimes n}=\sum_\lambda u_\lambda P_\lambda$ for appropriate number $0\leq u_\lambda\leq 1$, so that for every $i\in[m_{\lambda,n}]$:
\begin{align}
\eps&\geq\tr\{A_\lambda\int_{U(d)}(U\rho U^\dag)^{\otimes n}\}dU\\
&=\sum_{i,j}c_{ij}\tr\{Y_{ij}\int_{U(d)}(U\rho U^\dag)^{\otimes n}\}dU\\
&=\sum_i c_{ii}\int_{U(d)}\tr\{Y_{ii}(U\rho U^\dag)^{\otimes n}\}dU\\
&=\sum_i c_{ii}\int_{U(d)}\tr\{U^{\dag\otimes n}Y_{ii}U^{\otimes n}\rho^{\otimes n}\}dU\\
&\geq c_{ii}m_{\lambda,n}^{-1}\tr\{P_\lambda\rho^{\otimes n}\}\\
&\geq c_{ii}m_{\lambda,n}^{-1}\tr\{P_{\lambda,\lambda}\rho^{\otimes n}\}\\
&\geq c_{ii}m_{\lambda,n}^{-1}\frac{\dim(F_\lambda)}{|T_\lambda|}t^{\otimes n}(T_\lambda).
\end{align}
We now need to estimate the term $\frac{\dim(F_\lambda)}{|T_\lambda|}$. With $h(i,j)$ denoting Hook-lenghts, the dimensions of the irreducible subspaces of any representation of $S_n$ on $(\mathbb C^d)^{\otimes n}$ ($d>0$) obey the following estimates.
\begin{eqnarray}
\frac{n!}{\prod_{i=1}^n(\lambda_i+d+1)!}\leq\frac{n!}{\prod_{(i,j)\in\lambda}h(i,j)}=\dim F_\lambda\leq\frac{n!}{\prod_{i=1}^d\lambda_i!}\qquad (\lambda\in \mathbb Y_{d,n}).
\end{eqnarray}
This implies that
\begin{align}
\frac{\dim(F_\lambda)}{|T_\lambda|}\geq\prod_{i=1}^d\frac{\lambda_i!}{(\lambda_i+d+1)!}\geq\prod_{i=1}^d\frac{1}{(n+d+1)^d}=(n+d+1)^{-d^2}.
\end{align}
Thus,
\begin{align}
\eps&\geq c_{ii}m_{\lambda,n}^{-1}(n+d+1)^{-d^2}t^{\otimes n}(T_\lambda)\geq c_{ii}m_{\lambda,n}^{-1}(n+d+1)^{-d^2}(2n)^{-d},
\end{align}
leading to $c_{ii}\leq (2dn)^{3d^2}\eps$ for all $i\in[m_{\lambda,n}]$. Now take any vector $x=\sum_i x_i\in V_\lambda$ such that each $x_i\in V_i$, and $\langle x,x\rangle=1$. Then
\begin{align}
\langle x,A_\lambda x\rangle&=\sum_{i,j}c_{ij}\langle x,Y_{ij}x\rangle\\
&=\sum_{i,j}c_{ij}\langle x_i,Y_{ij}x_j\rangle\\
&\leq\sum_{i,j}|c_{ij}|\\
&\leq\sum_{i,j}\sqrt{|c_{ii}|\cdot|c_{jj}|}\\
&\leq m_{n,\lambda}(2dn)^{3d^2}\eps\\
&\leq (2dn)^{4d^2}\eps,
\end{align}
and this proves that
\begin{align}
A_\lambda\leq (2dn)^{4d^2}\eps P_\lambda\qquad\forall\ \lambda\in\mathbb Y_{d,n}
\end{align}
so that we finally get the desired
\begin{align}
A\leq(1-\eps\cdot(2dn)^{4d^2})\eins.
\end{align}
\end{proof}
\end{section}
\begin{section}{Example}
As an example, we consider $d=2$. Here, the set of states can be identified with the unit ball in $\mathbb R^3$, and this is also the only reason why we restrict ourselves to that case. Any density operator can be written in the Bloch sphere representation as $\xi=\frac{1}{2}(\eins+\sum_{i=1}^3x_i\sigma_i)$, where $\sigma_1,\sigma_2,\sigma_3\in\mathcal B(\mathbb C^2)$ are the usual Pauli matrices, $x_1,x_2,x_3\in\mathbb R$ and $\rho\in\cs(\mathbb C^2)$ is equivalent to $\sum_{i=1}^3 x_i^2\leq1$.\\
let $\sigma=\frac{1}{2}(\eins+\frac{1}{2}\sigma_3)$, and $\mathfrak S=\{\rho:x_3\leq1/4\}$. Let us assume that, by using (for some large number $n$ and some small $\eps>0$), the POVM $\mathcal M:=\{\eins-P_n\}\cup\{P_{f,\lambda}\}_{(f,\lambda)\in\Lambda_\eps}$. The measurement outcome associated to $\eins-P_n$ shall be denoted $e$. But then for every $\rho_s\in\mathfrak S$,
\begin{align}
\tr\{P_{n,s}\rho_s^{\otimes n}\}\geq1-(2n)^82^{-n\ln(2)\eps^2}.
\end{align}
This implies the following. Assume that a state $\xi\in\mathfrak S\cup\{\sigma\}$ is being sent, and that we try to find out which state it was. Let  $\tilde r_\rho$ denote the distribution corresponding to the eigenvalues of the pinching of $\rho$ to the eigenbasis of $\sigma$ and $r_\rho$ the spectrum of $\rho$. Define the sets
\begin{align}
B(f,\lambda):=\{\rho\in\mathfrak S\cup\{\sigma\}:\|\bar f-\tilde r_\rho\|\leq\eps\ \bigwedge\ \|\bar\lambda-r_\rho\|\leq\eps\}.
\end{align}
After application of our measurement $\mathcal M$ to the system, we know the following:
\begin{align}
\mathbb P(\xi=\sigma|\mathcal M=e)&\geq1-2^{-n\min_{\rho\in\mathfrak S}D(\sigma\|\rho)-\Theta(n,\eps))}\\
\mathbb P(\xi\in B(f,\lambda)|\mathcal M=(f,\lambda))&\geq1-(2n)^82^{-n\ln(2)\eps^2}.
\end{align}
So, in the end, we can (asymptotically, with high probability) not only distinguish $\sigma$ from $\mathfrak S$, but to some extent also the elements of $\mathfrak S$ from one another (with accuracy $\eps$, and a drawback being that states with the same spectrum and the same pinching to the eigenbasis of $\sigma$ cannot be distinguished). In our example for $d=2$ this means that we get the $z$-coordinate of the true state, and its spectrum (=distance from the origin).
\end{section}
\\\\
\emph{Conclusion.}\\
It is an open question whether an even finer decomposition of $V_f$ into a more subtle choice of irreducible subspaces may reveal even more information about the states being measured. A look at pure permutation invariant von Neumann POVMs on the symmetric subspace suggests that there will always be a small ambiguity left.\\
However, switching to non-optimal (in the sense of Stein's Lemma) measurements that are not permutation-invariant any more (e.g. by concatenating the measurements that are known to be optimal for state discrimination in the sense of Stein's Lemma) might lead to sub-optimal, but information complete measurements.\\
The proof of Theorem \ref{theorem:additional-result} suggests that the 'quantum method of types' still needs more refinement: For each $V_{f,\lambda}$, the multiplicity of $F_\lambda$ in $V_f$ can be larger than one if $d>2$. This suggests that there is still some information contained in the systems we consider that is not exploited by our approach.\\
This is in no contradiction to what we said earlier in our example.
\ \\\\
\emph{Acknowledgement.}
This work was supported by the BMBF via grant 01BQ1050, the hospitality of the Isaac Newton Institute for Mathematical Sciences in the final stages of preparation of this manuscript is gratefully acknowledged. Many thanks go to Koenraad Audenaert, Fernando Brandao, Matthias Christandl, David Gross, Piotr \'{S}niady and Michael Walter for stimulating discussions.

\end{document}